\documentclass{article}
\usepackage[utf8]{inputenc}
\usepackage{amsmath}
\usepackage{amssymb}
\usepackage{amsthm}

\newtheorem{theorem}{Theorem}[section]
\newtheorem{definition}[theorem]{Definition}
\newtheorem{claim}[theorem]{Claim}
\newtheorem{remark}[theorem]{Remark}

\DeclareMathOperator{\cave}{c_{ave}}

\title{On the Geometrical and Kinematical Foundations of the Symmetric Relativity Model: Lorentz Transformation and Time Dilation}
\author{M. Mateljevi\'c}
\date{June 29, 2026}

\begin{document}

\maketitle

\begin{abstract}
We examine the kinematic foundations of relativity by considering two inertial frames, $S$ and $S'$, in a standard configuration, where $S'$ moves along the common spatial $x$-axis at a constant velocity $v$. By relaxing Einstein’s second postulate regarding the universal invariance of the one-way speed of light, we adopt an operational framework grounded strictly on the directly observable two-way (round-trip) speed of light, evaluated alongside the principles of spacetime homogeneity, linearity, and reciprocity.

Within this setting, we demonstrate that:
(i) the classical Lorentz transformation is recovered exactly along the coordinate axes under a generalized two-way synchronization scheme without requiring Einstein's second postulate;
(ii) Langevin’s light-clock argument fundamentally implies that the longitudinal scale factor $b$ matches the standard Lorentz factor $\gamma$; and
(iii) transverse lengths remain strictly invariant ($b_z = 1$).

Crucially, to resolve the ontic paradox of multiple co-existing wavefront centers, we introduce a kinematically symmetric model relative to an absolute cosmological rest frame (or geometric anchor) $K$, wherein $S$ and $S'$ move with equal and opposite velocities ($u$ and $-u$, respectively). Within this generalized framework, allowing for anisotropic one-way light propagation via Reichenbach-type parameters ($\varepsilon$ or $\kappa$) yields a consistent, linear velocity-addition law, a generalized Doppler effect, and a flat but oblique spacetime metric. Finally, we prove that all round-trip observables remain strictly invariant under synchronization gauge transformations (reflecting the Tangherlini--Edwards perspective) and demonstrate that the resulting non-diagonal metric structure is fully consistent withthe Sagnac effect over closed spatial loops.
\end{abstract}

\section{Foundational Framework and Cosmological Context}

By relaxing Einstein's second postulate of light-speed invariance, this framework is established strictly upon the observable two-way (round-trip) speed of light. This operational approach allows a rigorous exploration of mathematical and kinematical dimensions, focusing specifically on spacetime homogeneity, linearity, and reciprocity between inertial frames. Concurrently, it reassesses classical relativistic transformations and the invariance of physical laws under relaxed axiomatic constraints.

According to the orthodox Einsteinian interpretation, absolute space and time do not exist. To bridge this view with contemporary astrophysics—which provides a critical distinction between fundamental geometric laws and specific cosmological solutions—we distinguish between two underlying physical and mathematical paradigms:
\begin{enumerate}
    \item[\textit{(a)}] \textbf{The Absolute Background Paradigm:} The postulation of an immobile spatial background. This can be operationalized either empirically via the Cosmic Microwave Background (CMB) rest frame acting as a cosmic fluid anchor, or treated strictly as an idealized stationary point within a formal mathematical model.
    \item[\textit{(b)}] \textbf{The Relational Paradigm:} The complete absence of an absolute space, where physical laws possess no inherent reference point, characterized by Einstein's symmetric framework of mutually "stationary" observers.
\end{enumerate}

In the relational case \textit{(b)}, modern cosmology still reveals at least a local and operational need for measurements that transcend purely relative configurations. While we cannot definitively state whether a deeper ontological stratum of reality dictates an absolute spacetime fabric, we can rigorously analyze the kinematic consequences of this paradigm by introducing hypothesis \textbf{(H1)}.

\subsection*{Model Assumptions and Kinematic Framework}
\noindent\textit{\textbf{Model Note:} In this section, we strictly adopt the relational configuration but relax the standard light-speed invariance by operationalizing the system under the following kinematic symmetry hypothesis.}

\vspace{0.5em}
\begin{center}
\begin{minipage}{0.9\textwidth}
\small\itshape
\textbf{(H1)} Einstein's kinematically symmetric framework of "stationary" observers holds between inertial frames, formulated \textbf{without} the second postulate.
\end{minipage}
\end{center}
\vspace{0.5em}

\noindent \textbf{Operational Context of Hypothesis \textit{(H1)}:}
Throughout the ensuing analysis of Langevin's light clock, we explicitly assume \textit{(H1)}. While paradigm \textit{(a)} provides the conceptual motivation for an absolute geometric anchor, the mathematical derivations in this section rest strictly on the kinematic symmetry of \textit{(H1)}. This allows us to isolate how spatial and temporal scaling emerge purely from the homogeneity of space and time, independently of the constancy of the speed of light.

\subsection{The Mathematical Distinction: Laws vs. Solutions}
Modern theoretical physics reconciles the relational and absolute viewpoints by bifurcating the description of the universe into two distinct components:
\begin{itemize}
    \item \textbf{The Fundamental Laws:} The underlying dynamical equations of physics (such as Einstein's field equations) are strictly Lorentz-invariant. They contain no intrinsic immobile points, no absolute background, and no preferred physical velocity.
    \item \textbf{The Cosmological Solution:} When these fundamental equations are solved for our specific universe, the initial and boundary conditions must be factored in. Because the Big Bang distributed matter and radiation almost isotropically on a large scale, the resulting solution possesses a unique, mathematically preferred rest frame.
\end{itemize}
Unifying these perspectives, an immobile point or a stationary background exists as a rigorous mathematical property of the cosmological \textit{solution}, rather than an inherent property of the underlying physical \textit{law}.

\subsection{The Cosmic Microwave Background as a Physical Proxy}
Instead of postulating a metaphysical, undetectable absolute space, modern cosmology operationalizes the stationary background through the comoving coordinate system. Within this framework, the "immobile medium" is represented by the "cosmic fluid"---the average distribution of all matter and radiation in the universe, anchored empirically by the Cosmic Microwave Background (CMB).

The universe is permeated by a sea of relic photons left over from the early universe. For an observer strictly at rest relative to this radiation field, the CMB temperature appears perfectly isotropic at approximately $2.725\text{ K}$ in all directions. Conversely, any observer moving relative to this cosmic fluid encounters a temperature dipole due to Doppler shifts: a blueshift (warming effect) in the direction of motion and a redshift (cooling effect) behind. This anisotropy allows the calculation of our peculiar velocity relative to the cosmic rest frame. If an observer is stationary relative to this fluid, their clock ticks at the maximum possible rate for the age of the universe, defining Cosmic Time. Any local deviation from this frame is measurable as a specific velocity, rendering the framework mathematically equivalent to a localized absolute space.

\subsection{Mathematical Formalization of the Stationary Comoving Source}
To operationalize the concept of a "stationary point" within the cosmic fluid, we employ the Friedmann-Lemaître-Robertson-Walker (FLRW) metric. Let a point $A$ represent a localized light source embedded directly within the cosmic substrate. By definition, a source that is "at rest" relative to this medium possesses zero peculiar velocity ($v_{\text{pec}} = 0$). Consequently, its spatial position remains fixed at the origin ($r = 0$) of the comoving coordinate system, even as the global hypersurface undergoes cosmological expansion.

The invariant spacetime interval $ds^2$ for a spatially flat universe ($k=0$) is expressed as:
\begin{equation}
ds^2 = -c^2 dt^2 + a^2(t) \left[ dr^2 + r^2 (d\theta^2 + \sin^2\theta d\phi^2) \right]
\end{equation}
where $a(t)$ is the dimensionless scale factor quantifying the expansion of the universe, and $t$ represents the Cosmic Time ticked by a comoving observer at rest relative to the CMB.

When a light signal is emitted isotropically from point $A$ at an initial cosmic time $t_0$, the propagation of the wave front follows a null geodesic ($ds^2 = 0$). For radial propagation ($d\theta = d\phi = 0$), the differential relation simplifies to:
\begin{equation}
c \, dt = a(t) \, dr.
\end{equation}

By separating variables, the comoving coordinate distance $r$ traversed by the photon field up to a subsequent cosmic time $t$ is determined through integration:
\begin{equation}
r = \int_{t_0}^{t} \frac{c \, dt'}{a(t')}.
\end{equation}

This integration demonstrates that the comoving coordinate $r$ depends strictly on the global temporal parameters $t_0$ and $t$. Because the coordinates of point $A$ are time-invariant ($r_A = 0$), the geometric center of the expanding light sphere remains permanently anchored at the source.

While the physical distance to the wave front scales dynamically as $D(t) = a(t)r$, the emission point $A$ experiences no physical displacement through the coordinate mesh. The cosmic fluid thus serves as a concrete, non-metaphysical medium: light propagates at velocity $c$ relative to the local comoving space, while the isotropic emitter $A$ remains perfectly stationary within the geometric fabric of the universe.
\subsection{Axiomatic Postulates vs. Deeper Physical Mechanisms}
It is crucial to note that Einstein elevated the constancy of the one-way speed of light to an axiomatic postulate rather than deriving it from a deeper physical mechanism. In a strict absolute background model (often generalized within the Neo-Lorentzian framework), the speed of light is constant relative to the stationary medium, while moving observers experience physical length contraction and time dilation as they move through this background.


Mathematically, both Einstein's symmetric approach and the absolute background model yield identical experimental predictions.

\section{Setup and Two-Way Synchronization on the Axis}


\subsection{Setup and Two-Way Synchronization on the Axis}

We consider inertial frames $S$ and $S'$ in a standard configuration: parallel spatial axes, with $S'$ moving at speed $v$ along the $+x$ axis of $S$, and origins coincident at $t = t' = 0$.

\begin{definition}[Two-way synchronization along $x$]
Let $S$ be an inertial frame with origin $O$ and unit vector $e_1 = (1, 0, 0)$. Denote by $c_1 = c_{\text{ave}}(e_1)$ the two-way (round-trip) speed of light along $e_1$. For a point $M$ on the $x$-axis with $d = |OM|$, a light pulse emitted from $O$ at $t = 0$ starts the clock at $M$ at time $d/c_1$ upon its arrival. We say clocks on the $x$-axis are synchronized via the two-way speed in the $x$ direction.
\end{definition}

Let $c_1(M)$ be the two-way (round-trip) speed of light from $O$ to $M$ and back from $M$ to $O$. By space homogeneity, $c_1(M)$ is independent of $M$, allowing us to define a global constant $c_1(S) \equiv c_1(M)$ for the frame $S$. Using symmetry and the Principle of Relativity, we can prove the following:

\begin{claim}
$c_1(S) = c_1(S')$.
\end{claim}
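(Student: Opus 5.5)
The plan is to prove the claim by combining the Principle of Relativity with the mirror symmetry of the standard configuration. The round-trip speed $c_1(S)$ is defined by an operational procedure that uses only clocks and rods at rest in $S$. If we carry out the identical procedure in $S'$, we obtain $c_1(S')$. Once we show that the two procedures are mapped into each other by a symmetry of the setup, the equality of the outcomes follows from the principle that physical laws take the same form in all inertial frames.

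\textbf{Step 1: the quantity is one-clock.} Since $c_1(M)$ does not depend on $M$ by space homogeneity, we may fix a rod $OM$ of rest length $d$ along the $x$-axis, with a mirror at $M$. The round-trip time $T$ is read on the single clock at $O$, so $c_1 = 2d/T$. This quantity uses no synchronization convention at all. It is therefore a genuine invariant of the experiment: it does not depend on the one-way speeds, and hence does not depend on the relaxed second postulate.

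\textbf{Step 2: transport the experiment to $S'$.} By (H1) and the Principle of Relativity, an identically constructed rod of rest length $d$ with a mirror and clock at rest in $S'$ must yield the same numerical round-trip time $T' = T$ in $S'$ units. The reason is that the laws governing light emission, reflection and propagation have the same form in $S'$ as in $S$.

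\textbf{Step 3: check the direction of motion.} This is the main obstacle. In $S$ the experiment is run along $+e_1$, the direction in which $S'$ recedes, whereas in $S'$ the frame $S$ recedes along $-e_1'$. The two situations are therefore not literally identical; they are related by the reflection $x \mapsto -x$ composed with the exchange $S \leftrightarrow S'$. Two facts handle this. First, the round-trip speed along $e_1$ equals the round-trip speed along $-e_1$, because the outgoing and return legs simply swap roles, so $c_{\text{ave}}(e_1) = c_{\text{ave}}(-e_1)$. Second, the reciprocity assumption says that $S$ moves relative to $S'$ with velocity $-v$. Under the relabelling $x \to -x$, $S \leftrightarrow S'$, the configuration is therefore mapped onto itself, and the round-trip measurement in $S$ is carried onto the round-trip measurement in $S'$.

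\textbf{Step 4: conclude.} By Steps 2 and 3, $c_1(S') = c_1(S)$. I expect the delicate point to be justifying that the rest length $d$ and the time unit are ``the same'' in both frames. I would handle this by defining the units operationally through identical standard rods and clocks, as the Principle of Relativity permits, rather than by comparing them across frames.
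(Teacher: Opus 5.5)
Your proposal is correct and follows the same route as the paper: define $c_1=2d/T$ by a single-clock round trip, then invoke the Principle of Relativity so that an identically built experiment in $S'$ with equal proper length $d'=d$ must give $T'=T$. Your Step~3 (using $c_{\text{ave}}(e_1)=c_{\text{ave}}(-e_1)$ together with the relabelling $x\mapsto -x$, $S\leftrightarrow S'$) is a precaution the paper skips; it is harmless, and it is needed only if (H1) is read as a symmetry of the two-frame configuration rather than as full equivalence of inertial frames, under which an experiment confined to $S'$ is indifferent to the direction in which $S$ recedes.
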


\begin{proof}
Let $S$ and $S'$ be two inertial frames. By definition, the two-way speed of light $c_1(S)$ in $S$ is measured via a co-linear round-trip experiment. A light pulse is emitted from $O$ at $t = 0$, reflects off a stationary mirror at $M$ (where $d = |OM|$), and returns to $O$ at time $t = T$. The speed is given by:
\[
c_1(S) = \frac{2d}{T}.
\]
By the operational principle of relativity, an identical experimental setup must be definable within $S'$. Let an observer at the origin $O'$ emit a light pulse along the $+x'$ axis at $t' = 0$ towards a mirror stationary in $S'$ at $M'$, at a proper distance $d' = |O'M'|$. The pulse returns to $O'$ at proper time $t' = T'$, yielding the two-way speed:
\[
c_1(S') = \frac{2d'}{T'}.
\]
According to the Principle of Relativity, formulated here by relaxing Einstein's second postulate regarding the invariance of the one-way speed of light, all inertial frames remain physically equivalent. The laws of physics, including the propagation of light and the operational definition of space and time intervals, must possess the exact same functional form in all inertial systems. No intrinsic physical property or absolute velocity can distinguish $S$ from $S'$.

Consequently, if the two experiments are geometrically identical in their respective rest frames such that the proper distances are equal ($d = d'$), the structural symmetry demands that the elapsed proper round-trip times must also be equal ($T = T'$). Dividing the identical proper distances by the identical proper times immediately implies $c_1(S) = c_1(S')$.
\end{proof}

\begin{remark}
It is crucial to emphasize that this proof \emph{does not} rely on Einstein's Second Postulate (the universal constancy of the one-way speed of light). Instead, $c_1$ is strictly defined as an operational, two-way (round-trip) average speed. The identity $c_1(S) = c_1(S')$ is a direct consequence of the Principle of Relativity and space homogeneity alone.

This separates convention from physics: the one-way speed of light requires synchronization of clocks (which is a matter of convention, e.g., Reichenbach's $\varepsilon$), while the two-way speed of light is measurable on a single clock and represents a pure physical fact. This raises the level of rigor, demonstrating that the theory is built on general geometric and physical symmetries rather than assumptions specific only to electromagnetic waves.
\end{remark}

Set $\gamma^0(u)=\frac{1}{\sqrt{1-u^2}}$ and $\gamma_1 = \gamma_1(v) = \gamma^0\left(\frac{v}{c_1}\right)$.

\begin{theorem}[Axis-restricted Lorentz form]\label{thm:axisLT}
Suppose (H1) paradigm. Under the above synchronization on $x$ (and analogously on $x'$), the standard Lorentz form relating $(x,t)$ and $(x',t')$ holds exactly for events restricted to the $x$- and $x'$-axes with $\gamma_1(v)$.
\end{theorem}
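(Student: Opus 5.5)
We would follow the classical ``relativity without the second postulate'' route, restricted to the $x$-axis. The clocks on each axis are synchronized by the two-way speed, in the sense of the Definition above. This synchronization is equivalent to declaring the one-way speed along $\pm x$ equal to $c_1$ in $S$, and along $\pm x'$ equal to $c_1(S')$ in $S'$. By the Claim, $c_1(S)=c_1(S')=c_1$.

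\textbf{Step 1: linearity and the form of the map.} By homogeneity of space and time, the map $(x,t)\mapsto(x',t')$ between axis events is affine. Since the origins coincide at $t=t'=0$, it is linear. Because the origin $x'=0$ moves with $x=vt$, it must have the form
\[
x'=b(v)(x-vt),\qquad t'=a(v)\,t+e(v)\,x .
\]

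\textbf{Step 2: reciprocity.} Under (H1), $S$ moves with velocity $-v$ relative to $S'$, and the roles of the frames are interchangeable. Together with isotropy of the $x$-axis under reflection $x\to -x$, $x'\to -x'$, this gives $b(-v)=b(v)$. It also forces the inverse map to have the same form with $v\to -v$. Composing the map with its inverse and comparing coefficients yields $a=b$ and
\[
e=\frac{1-b^2}{b^2 v}\,b .
\]
This leaves a single unknown function $b(v)$.

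\textbf{Step 3: pinning down $b$ with light signals.} A pulse emitted from the common origin at $t=0$ along $+x$ satisfies $x=c_1t$ in the chosen synchronization. The same pulse, seen in $S'$ under that frame's two-way synchronization, must satisfy $x'=c_1t'$. The same holds for the $-x$ pulse. Substituting $x=\pm c_1 t$ into the map and demanding $x'=\pm c_1 t'$ gives two equations. From them we obtain $e=-bv/c_1^2$ and $a=b$, and then, combined with Step 2, $b^2(1-v^2/c_1^2)=1$. Taking the positive root, so that the map reduces to the identity as $v\to0$, gives $b=\gamma_1(v)$. This is exactly the Lorentz form with $c$ replaced by $c_1$.

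\textbf{The main obstacle.} The delicate point is justifying in Step 3 that the $S'$-synchronized one-way speed along $\pm x'$ equals $c_1$. This is the step where one might worry that the second postulate is being smuggled back in. We would argue that it follows from the Definition applied in $S'$: a clock at $M'$ is started at reading $d'/c_1$ on arrival of the pulse from $O'$. Hence, by construction, every outbound pulse obeys $x'=c_1t'$. The return leg then obeys the same relation, because the round-trip time is $2d'/c_1$, by the Claim together with space homogeneity in $S'$. Thus the light-line conditions are consequences of the convention plus the Claim, not of an invariance postulate.

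If the $\pm$ light conditions alone suffice to fix $e$ and $b$, Step 2 then serves only as a consistency check. The remaining risk is the sign and isotropy argument for $b(-v)=b(v)$, which we would take from the reflection symmetry built into (H1).
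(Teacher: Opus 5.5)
Your proof is correct. Step 2 is essentially the paper's reciprocity step: $x'=b(x-vt)$ and $x=b(x'+vt')$, solved to give $t'=b\bigl(t-x(b^2-1)/(b^2v)\bigr)$. However, you fix $b$ by a different mechanism.

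The paper does not map the one-way light lines of $S$ onto those of $S'$. Instead it runs Langevin's round-trip light clock on a longitudinal rod at rest in $S'$. In $S$ the round trip takes $L/(c_1-v)+L/(c_1+v)=(2L/c_1)\gamma_1^2$. In $S'$ it takes $2L'/c_1$ on a single clock, which uses only the Claim. Inserting $L=L'/b$ and the dilation $\Delta t=b\,\Delta t'$ gives $b^2=\gamma_1^2$.

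That route invokes the one-way convention only in $S$, through the speeds $c_1\mp v$, and only the round-trip speed in $S'$, which suits the paper's operational emphasis. It passes through the physical statements of dilation and contraction before the $t'$-equation is recovered from the reciprocity formula.

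Your route uses both frames' synchronization conventions symmetrically and is purely algebraic. The two light lines give the simultaneity coefficient $e=-bv/c_1^2$ (and $a=b$) independently of reciprocity. Equating this with Step 2's $e=b(1-b^2)/(b^2v)$ gives $b=\gamma_1$ and the complete map at once. You also handle the worry about the $S'$ convention correctly, via the Definition applied in $S'$ plus the Claim.

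One correction to your closing paragraph: the light conditions alone can never determine $b$. The map $x'=b(x-vt)$, $t'=b(t-vx/c_1^2)$ satisfies $x'^2-c_1^2t'^2=b^2(1-v^2/c_1^2)(x^2-c_1^2t^2)$, so it preserves the light lines for every $b$. Step 2 is therefore indispensable, not a consistency check. The same holds in the paper: the light clock by itself only gives (dilation factor of the $S'$ clock)$\,\times\, b=\gamma_1^2$, and it is reciprocity that sets the dilation factor equal to $b$.
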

Only the two-way speed $c_1 = c_{\text{ave}}(e_1)$ enters the derivation for these axis-restricted transformations. The full proof is provided in Section \ref{Lange}.

\begin{remark}[Asymmetric vs. Symmetric Contraction Paradigm]\label{rem:symmetry_contrast}
A profound conceptual distinction emerges when comparing these two structural frameworks:
\begin{itemize}
    \item In the \textbf{Absolute Background Paradigm (a)}, the length contraction is fundamentally \textbf{asymmetric}. Because the base frame $K$ acts as a privileged, immobile geometric anchor, the structural contraction is unidirectional relative to this background, meaning that observers in $K$ and $S$ do not share a mathematically reciprocal view of physical dimensions.
    \item Conversely, under the \textbf{Kinematic Symmetry Hypothesis (H1)}, the contraction becomes strictly \textbf{symmetric} (reciprocal). Due to the pure relational nature of the framework, if an observer in $S$ measures a longitudinal contraction of a moving rod in $S'$ via the factor $\gamma_1$, an identical operational experiment conducted in $S'$ looking back at $S$ yields the exact same contraction factor.
\end{itemize}
This highlights that while the mathematical robustness of the longitudinal contraction formula survives even under transverse anisotropy $c_n \neq c_1$ in Paradigm (a), the underlying physical interpretation transforms from an asymmetric background effect into a beautiful, purely geometric reciprocity under Hypothesis (H1).
\end{remark}

\begin{remark}[Isotropy of the two-way speed]
Null results of Michelson--Morley--type experiments support $c_{\text{ave}}(e_1)=c_{\text{ave}}(e_3)=c$ while leaving the one-way speeds convention-dependent (Einstein/Edwards/Reichenbach/Tangherlini synchronizations).
\end{remark}

Let $n$ be a spatial unit vector orthogonal to the $x$-axis in $S$, and let $e_1 = (1, 0, 0)$ denote the unit vector along the $x$-axis. We define $c_n \equiv c_{\text{ave}}(n, S)$ as the transverse two-way speed of light and $c_1 \equiv c_{\text{ave}}(e_1, S)$ as the longitudinal two-way speed of light. Within the framework of the Absolute Background Paradigm~\textit{(a)}, we admit the general anisotropy condition where $c_n \neq c_1$ inside the moving frame.

\begin{theorem}[Robustness of contraction under anisotropy]\label{thm:anisotropy_contraction}
Let $S$ be an inertial frame moving with a constant velocity $v$ relative to the standard absolute inertial frame $K$ along the $x$-axis. If the operational two-way speed of light in $S$ is direction-dependent such that $c_n = c_{\text{ave}}(n, S) \neq c_{\text{ave}}(e_1, S) = c_1$, the longitudinal length contraction of a rod $R$ measured in $K$ remains strictly invariant and satisfies
\begin{equation}\label{eq:robust_contraction}
L_S = \gamma L_S^*,
\end{equation}
where $L_S$ is the proper length of $R$ in $S$, $L_S^*$ is the measure of $R$ from $K$, and $\gamma = \left(1 - \frac{v^2}{c^2}\right)^{-1/2}$ is defined via the isotropic speed of light $c$ in the base frame $K$.
\end{theorem}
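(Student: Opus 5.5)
The plan is to work entirely in the absolute frame $K$, where light travels isotropically at speed $c$, and never use the internal clocks or synchronization of $S$. The anisotropy $c_n \neq c_1$ is an internal, round-trip statement about $S$. The contraction $L_S = \gamma L_S^*$ concerns only the relation between the proper length of $R$ and its $K$-measured length. So the strategy is to \emph{calibrate} the proper length of $R$ by a longitudinal round trip in $S$. The transverse speed $c_n$ then never enters, and we take $c_1 = c$ as the operationally observed longitudinal two-way speed, consistent with the Michelson--Morley remark.

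First, we compute in $K$ the round trip along $R$. The rod has length $L_S^*$ in $K$ and moves with velocity $v$. A pulse sent from the rear end reaches the front end after $K$-time $L_S^*/(c-v)$ and returns after a further $L_S^*/(c+v)$. The total $K$-time is
\[
T_K = \frac{2cL_S^*}{c^2-v^2} = \frac{2L_S^*}{c}\,\gamma^2 .
\]

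Second, we convert $T_K$ into the proper time $T_S$ read on the single clock at the rear end of $R$. This requires the time-dilation factor of a clock moving through $K$, namely $T_S = T_K/\gamma$. We would take this factor from the Langevin light-clock argument (result (ii) of the abstract, $b=\gamma$, proved in Section~\ref{Lange}). That argument uses a transverse light path observed from $K$, so it is a statement about clocks moving relative to $K$ and is independent of the internal value of $c_n$ in $S$.

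Third, we use the operational definition of proper length through the longitudinal two-way speed, $c_1(S) = 2L_S/T_S$ with $c_1 = c$. This gives
\[
L_S = \frac{c\,T_S}{2} = \frac{c}{2}\cdot\frac{2L_S^*\gamma^2}{c\,\gamma} = \gamma L_S^* ,
\]
which is \eqref{eq:robust_contraction}. Only $c_1$ appears anywhere in this chain, which is exactly the claimed robustness.

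The main obstacle is the second step: justifying the clock-slowing factor $\gamma$ for $S$ without assuming isotropy of the two-way speed in $S$. The naive Langevin argument applied to a transverse light clock in $S$ would produce a factor depending on $c_n$. To avoid this, I would first try to derive the time dilation purely kinematically in $K$, either from Theorem~\ref{thm:axisLT} restricted to the $x$-axis (which involves only $c_1$), or by treating the clock's rate as a property of its motion relative to $K$ alone. I would note explicitly that the transverse anisotropy would instead manifest as an unequal transverse calibration, $b_z \neq 1$, rather than as an alteration of the longitudinal factor.
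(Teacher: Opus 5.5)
The gap is your second step, and it is exactly where the anisotropy enters. In this setting the paper's starting relation between $K$-time and $S$-time is $T=\frac{c_n}{c}\gamma t$. For a transverse light clock of length $L$, $K$ registers $2L\gamma/c$ while the $S$-clock registers $2L/c_n$. So $S$-clocks run at factor $\frac{c_n}{c}\gamma$ relative to $K$, not $\gamma$. Your own closing paragraph concedes this, which contradicts the claim in step two that the light-clock argument is independent of $c_n$.

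With the correct factor, your chain gives $T_S=2L_S^*\gamma/c_n$ and hence $L_S=\frac{c_1}{c_n}\gamma L_S^*$. That is not the statement when $c_n\neq c_1$. Conversely, a factor of exactly $\gamma$ with invariant transverse lengths forces $c_n=c$. Together with your added assumption $c_1=c$, this gives $c_n=c_1$, contradicting the hypothesis. The theorem does not grant $c_1=c$, and the Michelson--Morley remark you cite asserts $c_1=c_n=c$, which is the negation of the hypothesis. Shifting the anisotropy into $b_z\neq 1$ changes the model rather than proving the theorem.

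Your proposed repairs do not help. Theorem~\ref{thm:axisLT} and the result $b=\gamma$ of Section~\ref{Lange} rest on the reciprocal scaling $\Delta t=b\,\Delta t'$ supplied by (H1). That relation is unavailable for the asymmetric $K$--$S$ relation of paradigm (a). Moreover, in Section~\ref{Lange} the dilation $\Delta t=\gamma\,\Delta t'$ is itself deduced from the contraction factor $b=\gamma$, so importing it comes close to assuming the conclusion. Declaring the rate to be $1/\gamma$ as a property of motion relative to $K$ simply postulates the missing step.

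The idea you are missing is that you never need to convert a $K$-time into an $S$-clock reading. The paper compares two round trips entirely in $K$-time. For a transverse arm of proper length $L_S$, substituting $t=2L_S/c_n$ into $T=\frac{c_n}{c}\gamma t$ gives $T=2L_S\gamma/c$, with $c_n$ cancelling identically. For the longitudinal rod the $K$-time is $2L_S^*\gamma^2/c$, exactly your step one. Equating these two $K$-times yields $L_S=\gamma L_S^*$, with neither $c_n$ nor $c_1$ surviving. The robustness therefore comes from a cancellation inside the transverse arm, not from keeping $c_n$ out of the argument. The physical input is the equality of the two $K$-round-trip times, not a clock-rate factor.
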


\begin{proof}
In the transverse direction, the relation between the time intervals $T$ (measured in $K$) and $t$ (measured in $S$) for a light signal is given by:
\begin{equation}\label{eq:time_relation_proof}
T = \frac{c_n}{c}\gamma t.
\end{equation}
The round-trip time interval $t$ in $S$ for a transverse rod of proper length $L_S$ is operationally defined by $2L_S = c_n t$. Substituting $t = \frac{2L_S}{c_n}$ into \eqref{eq:time_relation_proof}, the anisotropic speed factor $c_n$ cancels out identically, yielding:
\begin{equation}\label{eq:transverse_T}
T = \frac{2L_S \gamma}{c}.
\end{equation}
In the longitudinal direction (along the $x$-axis), the observer in $K$ measures the contracted length of the rod as $L_S^*$. The round-trip time interval evaluated from the perspective of $K$ (where the speed of light is isotropic and equals $c$) is:
\begin{equation}\label{eq:lang2a_proof}
T = \left(\frac{2L_S^*}{c}\right)\gamma^2.
\end{equation}
Equating \eqref{eq:transverse_T} and \eqref{eq:lang2a_proof} gives:
\begin{equation}
\frac{2L_S \gamma}{c} = \left(\frac{2L_S^*}{c}\right)\gamma^2.
\end{equation}
Simplifying this identity immediately yields $L_S = \gamma L_S^*$, completing the proof.
\end{proof}

Using the proof of Theorem~\ref{thm:anisotropy_contraction} and the associated structural properties, we establish the following result:

\begin{theorem}\label{thm:inverse_transformation}
Suppose that $K$ is the absolute frame with coordinates $(X, T)$ and $S$ is the moving frame with coordinates $(x, t)$. The inverse spatial transformation and the transverse time relation are given by:
\begin{equation}\label{eq:transf_rules}
x = \gamma (X - vT), \quad \text{and} \quad T = \frac{c_n}{c}\gamma t.
\end{equation}
Information regarding a rod stationary in $K$ follows: let a rod of proper length $L$ be stationary in the absolute frame $K$. If its length is operationally measured via simultaneous observations in the moving frame $S$ as $L'$, then it satisfies:
\begin{equation}
L' = \gamma L.
\end{equation}
\end{theorem}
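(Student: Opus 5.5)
The plan is to obtain the two relations in \eqref{eq:transf_rules} by reusing the two ingredients of the proof of Theorem~\ref{thm:anisotropy_contraction}: the transverse light clock and the longitudinal contraction. I will then read off the rod result by fixing one synchronization convention in $S$.

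\textbf{Step 1 (time relation).} I will re-derive \eqref{eq:time_relation_proof} explicitly. Take a transverse rod of proper length $L_S$ at rest in $S$, with a mirror at its far end. In $S$ the round trip lasts $t = 2L_S/c_n$, by the definition of $c_n$. In $K$, where light is isotropic with speed $c$, the pulse travels along the hypotenuse while the rod moves by $vT/2$ in each half trip. Since transverse lengths are unchanged (the assertion $b_z=1$ from the abstract), this gives
\[
\left(\frac{cT}{2}\right)^2 = L_S^2 + \left(\frac{vT}{2}\right)^2,
\qquad\text{so}\qquad T = \frac{2L_S\gamma}{c}.
\]
Eliminating $L_S$ gives $T = \frac{c_n}{c}\gamma t$. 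This relates the rate of a clock at rest in $S$ to the rate of $K$-time. By homogeneity of time and linearity it extends to all intervals read on such a clock.

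\textbf{Step 2 (spatial relation).} Consider a point at rest in $S$ at $S$-coordinate $x$. In $K$ the origin $O_S$ sits at $X = vT$. By Theorem~\ref{thm:anisotropy_contraction}, the segment from $O_S$ to the point is contracted in $K$ to length $x/\gamma$. Hence $X - vT = x/\gamma$ holds at every $K$-time $T$, i.e.\ $x = \gamma(X - vT)$. Linearity, which follows from homogeneity, ensures this applies to all events, not only to rod endpoints.

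\textbf{Step 3 (rod at rest in $K$).} Place the rod's endpoints at $X=0$ and $X=L$ for all $T$. An observer in $S$ marks the $S$-positions of both endpoints at the same $S$-time $t$. The main obstacle is what ``simultaneous in $S$'' means. The second relation in \eqref{eq:transf_rules} is stated with $T$ depending on $t$ alone, not on $x$. I will adopt this as the synchronization of $S$-clocks, namely the Tangherlini-type, absolute synchronization, in which each $S$-clock is set so that $T$ is a function of $t$ only. This has to be stated explicitly, since under Einstein synchronization one would instead get the usual reciprocal contraction $L' = L/\gamma$. With this convention, equal $t$ forces equal $T$, say $T_0$. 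Step~2 then gives
\[
x_2 - x_1 = \gamma(L - vT_0) - \gamma(0 - vT_0) = \gamma L,
\]
so $L' = \gamma L$. This expresses the asymmetry described in Remark~\ref{rem:symmetry_contrast}: the rod at rest in $K$ appears elongated in $S$, because the $S$ measuring rods are themselves contracted relative to $K$.
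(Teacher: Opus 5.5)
Your proposal is correct and follows essentially the same route as the paper. The two relations in \eqref{eq:transf_rules} come from the ingredients of the proof of Theorem~\ref{thm:anisotropy_contraction}, and the rod result is obtained as in Remark~\ref{rem:algebraic_asymmetric}: with $T = at$, $a = \frac{c_n}{c}\gamma$, the condition $\Delta t = 0$ forces $\Delta T = 0$ and hence $\Delta X = \Delta x/\gamma$. Your version is only more explicit, deriving \eqref{eq:time_relation_proof} from the transverse light clock with $b_z = 1$ (which the paper simply asserts) and openly stating the Tangherlini-type synchronization that the paper's remark uses implicitly.
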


\begin{remark}[Algebraic Proof of Asymmetric Contraction]\label{rem:algebraic_asymmetric}
To explicitly demonstrate the asymmetric nature of the transformation under the Absolute Background Paradigm~\textit{(a)}, we examine the coordinate relations between the absolute frame $K$ and the moving frame $S$ given in \eqref{eq:transf_rules}.

By defining the effective anisotropy scale factor as $a = \frac{c_n}{c}\gamma$, the temporal relation simplifies to $T = a t$. Substituting this directly into the spatial transformation allows us to express the absolute coordinate $X$ in terms of the moving coordinates:
\begin{equation}
X = \frac{x}{\gamma} + v T = \frac{x}{\gamma} + v a t.
\end{equation}
Now, let us consider a rod stationary in the absolute frame $K$, whose length is operationally measured via simultaneous observations in the moving frame $S$. Imposing the simultaneity condition in $S$ such that $t_1 = t_2$ (implying $\Delta t = 0$), the spatial displacement in the absolute background simplifies to:
\begin{equation}
X_2 - X_1 = \frac{x_2 - x_1}{\gamma} \quad \implies \quad \Delta X = \frac{\Delta x}{\gamma}.
\end{equation}
Identifying $\Delta X = L$ as the proper length in $K$ and $\Delta x = L'$ as the measured length in $S$, this identity algebraically confirms that when simultaneity is defined relative to the moving frame $S$, the longitudinal structural relation depends strictly on the baseline factor $\gamma$. The directional anisotropy factor $a$ completely decouples from the spatial measurement, beautifully illustrating the asymmetric structural fabric of paradigm~\textit{(a)}.
\end{remark}


\section{Spatial and Temporal Scaling via Langevin's Light Clock}\label{Lange}

We begin by distinguishing between two fundamental physical and mathematical paradigms regarding the underlying structure of spacetime:
\begin{enumerate}
    \item[\textit{(a)}] \textbf{The Absolute Background Paradigm:} The postulation of an immobile background, operationalized either empirically via the Cosmic Microwave Background (CMB) rest frame acting as a cosmic fluid anchor, or treated strictly as an idealized stationary point within a mathematical model.
    \item[\textit{(b)}] \textbf{The Relational Paradigm:} The complete absence of any absolute space, where physical laws possess no inherent reference point.
\end{enumerate}

\subsection*{Model Assumptions and Kinematic Framework}
\noindent\textit{\textbf{Model Note:} In this section, we strictly adopt the relational configuration but relax the standard light-speed invariance by operationalizing the system under the following kinematic symmetry hypothesis.}

\vspace{0.5em}
\begin{center}
\begin{minipage}{0.9\textwidth}
\small\itshape
\textbf{(H1)} Einstein's kinematically symmetric framework of 'stationary' observers holds between inertial frames, formulated \textbf{without} the second postulate.
\end{minipage}
\end{center}
\vspace{0.5em}

\noindent \textbf{Operational Context of Hypothesis \textit{(H1)}:}
Throughout the ensuing analysis of Langevin's light clock, we explicitly assume \textit{(H1)}. While paradigm \textit{(a)} provides the conceptual motivation for an absolute geometric anchor, the mathematical derivations in this section rest strictly on the kinematic symmetry of \textit{(H1)}. This allows us to isolate how spatial and temporal scaling emerge purely from the homogeneity of space and time, independently of the constancy of the speed of light.

\begin{theorem}\label{ThSym}
Assume paradigm \textit{(H1)}. Consider linear and homogeneous coordinate transformations between the inertial frames $S$ and $S'$ in a standard configuration, where $S'$ moves with a constant relative velocity $v$ along the $x$-axis of $S$. Let $R'$ be a rod at rest in $S'$ with proper length $L'$. If an observer in $S$ measures the length of $R'$ as $L$, then $L' =\gamma L$.
\end{theorem}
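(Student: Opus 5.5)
We will prove Theorem~\ref{ThSym} with a Langevin light-clock argument, mirroring the proof of Theorem~\ref{thm:anisotropy_contraction} but with $S$ in the role of $K$. The only inputs are linearity, homogeneity, the reciprocity contained in (H1), and the equality $c_1(S)=c_1(S')$ from the Claim. We write $c$ for the common two-way speed: on the axis, $c=c_1$. Transversally we denote it $c_n$, and we assume its isotropy as supported by the remark on Michelson--Morley experiments. Accordingly, $\gamma=\gamma^0(v/c)$.

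\textbf{Step 1 (transverse clock, time dilation).} Place a transverse light clock of proper length $\ell$ at rest in $S'$. Its round-trip period in $S'$ is $t'=2\ell/c$. By linearity and homogeneity, a transverse length can only change by a factor $b_z(v)$. Reciprocity under (H1) gives $b_z(v)b_z(-v)=1$, and the reflection symmetry $x\mapsto -x$ gives $b_z(v)=b_z(-v)$. Hence $b_z=1$, so the mirror separation seen from $S$ is still $\ell$. In $S$ the pulse travels along the hypotenuse while the clock advances by $vT/2$ per half-period. Using the two-way speed $c$ in $S$, which is legitimate because the path closes on the clock and only round-trip timing enters, we get $(cT/2)^2=\ell^2+(vT/2)^2$. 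Therefore $T=\gamma\,(2\ell/c)=\gamma t'$.

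\textbf{Step 2 (longitudinal clock).} Now place the same clock along the $x'$-axis with proper length $L'$, and let $L$ be its length as measured in $S$. Computed in $S$, the round trip takes
\[
T_{\parallel}=\frac{L}{c-v}+\frac{L}{c+v}=\frac{2L}{c}\gamma^2 .
\]
This step is the delicate point. We must justify using $c\mp v$ for the one-way legs in $S$ while only the two-way speed is physical. The plan is to note that with Reichenbach parameters $\varepsilon$ the legs become $L/(c_+-v)$ and $L/(c_--v)$ with $1/c_++1/c_-=2/c$, and then check that the resulting sum is not $\varepsilon$-independent in general. We therefore instead impose the two-way synchronization of Definition~2.1 in $S$, under which the axis speeds are $c$ both ways (this is the content of Theorem~\ref{thm:axisLT}). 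Equivalently, we invoke Theorem~\ref{thm:axisLT} directly for the worldlines of the two mirrors.

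\textbf{Step 3 (comparison).} In $S'$ both clocks tick with the same proper period $2L'/c$ when $L'=\ell$, because $c_1(S')$ is the same in all directions. Since the two clocks are co-located and start together, their rates must agree as seen from $S$, so $T_\parallel=T$. This gives $\frac{2L}{c}\gamma^2=\gamma\frac{2L'}{c}$, hence $L'=\gamma L$.

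The main obstacle is Step 2, namely justifying the $c\mp v$ legs without the second postulate. We handle it by appealing to the axis-restricted Lorentz form (Theorem~\ref{thm:axisLT}). From that form, the contraction also follows directly: setting $\Delta t=0$ in $x'=\gamma(x-vt)$ yields $L'=\gamma L$. This gives a cross-check on the clock computation.
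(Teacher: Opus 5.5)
Your argument is correct in substance, under the same standing assumptions the paper uses (light moves at $c$ in $S$, and the two-way speed is isotropic). However, it runs the logic in the opposite direction from the paper.

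\textbf{Comparison with the paper's route.} The paper never uses a transverse clock to obtain time dilation. It postulates the reciprocal longitudinal form $x'=b(x-vt)$, $x=b(x'+vt')$, and solves for $t'$. This shows that the clock at the origin of $S'$ is dilated by the same factor $b$ that contracts the rod, i.e. $\Delta t=b\,\Delta t'$. The longitudinal Langevin clock alone then gives $\Delta t=\Delta t'\,\gamma^2/b$, hence $b=\gamma$. The transverse clock enters only afterwards, to deduce $b_z=1$ from the dilation already established.

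You go the other way. You get $b_z=1$ from reciprocity plus reflection symmetry, time dilation from the transverse clock, and the contraction by comparing co-located transverse and longitudinal clocks. This is the structure of the paper's proof of Theorem~\ref{thm:anisotropy_contraction}.

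The paper's route is more economical: only the longitudinal two-way speed enters, which is exactly why it transfers to Theorem~\ref{thm:axisLT} under $c\to c_1$. Yours needs isotropy of the two-way speed in both frames and propagation at $c$ along the diagonals of $S$. So it would not give the axis-restricted result if $c_n\neq c_1$. In exchange, you apply reciprocity only to transverse lengths, where simultaneity plays no role, instead of postulating a reciprocal longitudinal transformation with a common coefficient. You also obtain time dilation and $b_z=1$ as independent outputs.

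\textbf{Justifications to repair.} First, in Step~1 the light path does \emph{not} close in $S$. Emission and return happen at $x=0$ and $x=vT$, so $T$ does depend on how $S$-clocks are synchronized. The step is still right, for the reason you give in Step~2:
\begin{itemize}
\item both events lie on the $x$-axis, where Definition~2.1 makes the one-way speed $c$ in both directions;
\item any further linear resynchronization that leaves the axis clocks alone shifts time by a term proportional to the transverse coordinate;
\item such a shift adds $+\delta$ to the outgoing leg and $-\delta$ to the return leg, so $T$ is unchanged.
\end{itemize}
The paper itself simply takes light to move at $c$ in $S$ for both the $c\mp v$ legs and the diagonal legs.

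Second, drop the appeal to Theorem~\ref{thm:axisLT}, both in Step~2 and in the closing cross-check. The paper proves that theorem by rerunning exactly this light-clock argument, including the relation $\Delta t=b\,\Delta t'$ from the proof of Theorem~\ref{ThSym}. Citing it here is therefore circular, and reading $L'=\gamma L$ off $x'=\gamma(x-vt)$ only restates what is being proved. Definition~2.1 alone justifies the $c\mp v$ legs.

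Finally, the Reichenbach return leg is $L/(c_-+v)$, not $L/(c_--v)$.
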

Here, we suppose that the speed of light in vacuum is $c$.

\begin{proof}
By homogeneity, we set a temporal scale factor $a$ tracking the spatial origin of the moving frame ($x' = 0$) and a spatial scale factor $b$ relating the lengths:
\begin{equation}\label{eq:time_scale}
t = at' \quad (\text{for } x' = 0)
\end{equation}
\begin{equation}\label{eq:length_scale}
L' = bL
\end{equation}
From spatial reciprocity and the equivalence of inertial frames, the forward and inverse coordinate relations along the axis of motion must satisfy:
\begin{equation}\label{eq:coord_relations}
x' = b(x - vt), \quad x = b(x' + vt').
\end{equation}
Substituting $x'$ into the expression for $x$ yields:
\begin{equation}
x = b \left[ b(x - vt) + vt' \right] = b^2x - b^2vt + bvt'.
\end{equation}
Solving for $t'$ provides:
\begin{equation}\label{eq:t_prime_solved}
t' = b \left( t - \frac{x(b^2 - 1)}{b^2v} \right).
\end{equation}
Evaluating this at the spatial origin of $S$ ($x = 0$) simplifies to $t' = bt$. Since our structural boundary condition for the origin of $S'$ ($x' = 0$) defined $t = at'$ in \eqref{eq:time_scale}, matching these symmetric scaling properties across frames requires that $a = b$. Consequently, time intervals at the spatial origins are related by $\Delta t = b\Delta t'$.

We now apply Langevin's approach for a round-trip light signal along the rod $R'$. With forward and backward speeds $c-v$ and $c+v$ relative to the moving rod, the total round-trip time interval $\Delta t$ measured in $S$ is:
\begin{equation}\label{eq:langevin_dt}
\Delta t = \frac{L}{c - v} + \frac{L}{c + v} = \frac{2Lc}{c^2 - v^2} = \left(\frac{2L}{c}\right)\gamma^2, \quad \gamma = \frac{1}{\sqrt{1 - v^2/c^2}}.
\end{equation}
In the moving frame $S'$, the two-way speed of light is isotropically preserved as $c$ by definition of our synchronization routine. Hence, the proper round-trip time interval localized over the rest length $L'$ is $\Delta t' = 2L'/c$. Using our initial spatial scaling relation $L = L'/b$ from \eqref{eq:length_scale}, we substitute into the equation for $\Delta t$:
\begin{equation}\label{eq:dt_substituted}
\Delta t = \frac{2(L'/b)}{c}\gamma^2 = \left(\frac{2L'}{c}\right)\frac{\gamma^2}{b} = \Delta t'\frac{\gamma^2}{b}
\end{equation}
Applying the symmetric temporal scaling demand $\Delta t = b\Delta t'$ into \eqref{eq:dt_substituted} provides:
\begin{equation}
b\Delta t' = \Delta t'\frac{\gamma^2}{b} \implies b^2 = \gamma^2 \implies b = \gamma
\end{equation}
Thus, the longitudinal scale factor is algebraically proven to be $b = \gamma$, yielding $L = L'/\gamma$ (standard Lorentz contraction) and confirming the time dilation factor $\Delta t = \gamma\Delta t'$.

To generalize for the transverse axes, consider a rod aligned parallel to the $z$-axis ($e_3$ direction) with proper length $L'_z$ at rest in $S'$. The round-trip proper time interval for a light signal along this rod in $S'$ is $\Delta t' = 2L'_z/c$. In the rest frame $S$, the rod moves along the $x$-axis at velocity $v$ while light travels diagonally. Equating the total round-trip distance $c\Delta t$ to this geometry yields:
\begin{equation}
c\Delta t = 2\sqrt{L_z^2 + \left(\frac{v\Delta t}{2}\right)^2} \implies \Delta t = \frac{2L_z}{c}\gamma .
\end{equation}
Substituting the temporal scaling requirement $\Delta t = \gamma\Delta t'$ gives $\gamma\Delta t' = \frac{2L_z}{c}\gamma \implies \Delta t' = 2L_z/c$. Equating this with our initial expression confirms $L'_z = L_z$. Consequently, the transverse scaling factor is $b_z = 1$, demonstrating that length contraction occurs strictly along the axis of relative motion.
\end{proof}

Thus, the longitudinal scale factor is algebraically proven to be $b = \gamma$, yielding $L = L'/ \gamma$ (the standard Lorentz contraction) and confirming the time dilation factor $\Delta t = \gamma \Delta t'$.

To generalize for the transverse axes, consider a rod aligned parallel to the $z$-axis ($e_3$ direction) with proper length $L_z'$ at rest in $S'$. The round-trip proper time interval for a light signal along this rod in $S'$ is $\Delta t' = 2L_z'/c$. In the rest frame $S$, the rod moves along the $x$-axis at velocity $v$ while light travels diagonally. Equating the total round-trip distance $c\Delta t$ to this geometry yields:
\begin{equation}\label{eq:transverse_geometry_langevin}
c\Delta t = 2\sqrt{L_z^2 + \left(\frac{v\Delta t}{2}\right)^2} \implies \Delta t = \frac{2L_z}{c}\gamma.
\end{equation}
Substituting the temporal scaling requirement $\Delta t = \gamma \Delta t'$ gives $\gamma \Delta t' = \frac{2L_z}{c}\gamma$, which implies $\Delta t' = 2L_z/c$. Equating this with our initial expression confirms $L_z' = L_z$. Consequently, the transverse scaling factor is $b_z = 1$, demonstrating that length contraction occurs strictly along the axis of relative motion. $\square$

\begin{proof}[Proof of Theorem 2.4]
To operationalize this derivation under the relaxed axiomatic framework where the universal invariance of the one-way speed of light is omitted, we directly adapt the Langevin light-clock argument by replacing the isotropic constant $c$ with the operationally verified two-way (round-trip) average speed of light, $c_{\text{ave}}(e_1) \equiv c_1$, along the directional axis of motion.

Under this generalized synchronization schema, an observer in the frame $S$ measures the forward and backward propagation speeds relative to the moving rod as $c_1 - v$ and $c_1 + v$, respectively. Consequently, the total accumulated round-trip time interval encoded in \eqref{eq:transverse_geometry_langevin} retains its exact algebraic structure under the substitution $c \rightarrow c_1$:
\begin{equation}\label{eq:dt_axis_restricted}
\Delta t = \frac{L}{c_1 - v} + \frac{L}{c_1 + v} = \frac{2Lc_1}{c_1^2 - v^2} = \left(\frac{2L}{c_1}\right)\gamma_1^2(v),
\end{equation}
where the axis-restricted generalized Lorentz factor is defined as:
\begin{equation}\label{eq:gamma1_definition}
\gamma_1(v) = \frac{1}{\sqrt{1 - \frac{v^2}{c_1^2}}}.
\end{equation}
In the moving frame $S'$, space homogeneity and the operational principle of relativity guarantee that the two-way speed of light is identically preserved as $c_1(S') = c_1(S) = c_1$ (as proven in Claim 2.2). Thus, the proper round-trip time localized over the rest length $L'$ is $\Delta t' = 2L'/c_1$.

By mapping these intervals through the symmetric scaling relation $\Delta t = b\Delta t'$, the algebraic consistency condition collapses onto $b = \gamma_1(v)$. Following an identical geometric projection for the transverse axes, the invariance $b_z = 1$ is preserved. This completes the operational proof of Theorem 2.4, demonstrating that the standard Lorentz transformation form holds exactly along the coordinate axes using exclusively the empirical two-way speed of light.
\end{proof}

\section{The Symmetric Model, Absolute Space, and the Stationary Illusion}
We consider both models Einstein's symmetric with 'stationary' observers approach and the strict stationary immobile approach.It means that  we consider two possibility:a) there is  an absolute space  and b)there is no an absolute space.
 Although Einstein's approach utilizes an initially chosen 'stationary' frame to construct the coordinate transformation, his model remains fundamentally symmetric due to the principle of relativity, ensuring that the inverse transformation maintains an identical mathematical structure. Conversely, the strict stationary approach in the Lorentz-Larmor sense implies an underlying physical asymmetry based on an absolute space."
To clarify the physical meaning and the foundational structure of the Special Theory of Relativity (STR), let us examine the standard thought experiment at the foundational instant when the origins coincide, $O' = O$, at $t' = t = 0$, and a light signal is emitted at their common space-time origin. This scenario immediately raises a fundamental question: is the resulting wavefront a sphere, and can we consider the center of this sphere, $A$, to be immobile in an objective, frame-independent sense?

When deriving the Lorentz Transformations (LT) for two inertial frames, $S$ and $S'$, in a standard configuration with a relative velocity $v$, one typically adopts the "stationary model" approach. This approach assumes first that $S$ is stationary from the perspective of an observer in $S$, and subsequently invokes symmetry to finalize the transformations. However, the conceptual validity and physical interpretation of this "stationary model" require deeper epistemological clarification.

To resolve the paradox of multiple co-existing centers for a single physical wavefront, we introduce a \textbf{manifestly symmetric framework} anchored by an "immobile point". Let us define a velocity $u$ such that
$u \oplus u = v$ via the relativistic velocity addition formula. We can then construct a symmetric model where frames $S'$ and $S$ move with velocities $u$ and $-u$, respectively, along the positive and negative $x$-axis relative to a foundational reference frame $K$. Instead of treating $K$ merely as another relative inertial frame, we conceptualize it as an \textbf{absolute rest frame}---the unique, immobile center of the expanding spherical wavefront in absolute space.

Within this framework, the "stationary illusion" becomes apparent through the localized perspective of individual observers. An observer $Ob_S$ in $S$ is at rest relative to $S$ and draws local conclusions from this specific viewpoint---such as measuring the length of a rod that is at rest in $S'$ but moving relative to $S$. To illustrate this, consider a practical analogy of two trains, $Tr_1$ and $Tr_2$, passing each other. An observer $Ob_1$ sitting in $Tr_1$, equipped solely with their own localized clock $Cl_1$, has no methodological alternative but to apply the stationary model approach to analyze the kinematics of the passing train.

Consequently, it must be emphasized that these relativistic measurements are \textbf{frame-dependent operational conclusions} specific to the localized observer's convention of simultaneity, rather than intrinsic, absolute realities. By rejecting this localized convention in favor of absolute simultaneity, the operational symmetry is broken, and the privileged, objective status of the absolute space frame $K$ is fully restored.

It is worth noting that within the orthodox framework of Special Relativity, this situation is not viewed as a genuine paradox. Einstein's interpretation attributes the displacement of the wavefront centers to the relativity of simultaneity: observers in $S$ and $S'$ construct distinct temporal hypersurfaces due to the standard Poincar\'{e}-Einstein synchronization convention, leading them to different yet mathematically consistent conclusions about the wave's geometric center. However, this operational symmetry renders the physical center of a single, objective light pulse frame-dependent. By contrast, our approach demonstrates that this apparent shifting of the center is merely an artifact of the chosen synchronization convention rather than an intrinsic, ontological property of spacetime. Under the Tangherlini-Edwards coordinates anchored to the comoving cosmic fluid, the physical center of the expanding light sphere remains unique and permanently stationary at the emission point $A$, thereby providing a more robust geometric foundation that aligns mathematical kinematics with the empirical reality of modern cosmology.

\subsection{Physical Interpretation: Einstein vs. Immobile Framework and Tangherlini Coordinates}

It is crucial to distinguish between Einstein's operational 'stationary' approach—where any inertial frame can be chosen as stationary—and the strict immobile framework governed by the preferred frame $K$.

To formally verify the validity of our scale factor hypothesis \textbf{(H)}, one can utilize the Tangherlini coordinate transformation. Assuming $K(X,T)$ represents the absolute rest frame where light propagation is strictly isotropic, the transformation to a moving frame $S(x,t)$ under absolute simultaneity is given by:
\begin{equation}
    x = \gamma_v (X - vT), \quad t = \frac{T}{\gamma_v}
\end{equation}
where $\gamma_v = (1 - v^2/c^2)^{-1/2}$. Differentiating the temporal relation yields:
\begin{equation}
    dt = \frac{1}{\gamma_v} dT = s(v) dT
\end{equation}
This demonstrates that in an absolute space framework, the scale factor $s(v) = 1/\gamma_v$ emerges cleanly without the coordinate mixing ($-\frac{vx}{c^2}$) present in standard Lorentz transformations.

While the one-way speed of light becomes anisotropic in $S$ under this metric ($c_{\pm} = c / (1 \pm v/c)$), the two-way round-trip speed remains invariant at $c$. Consequently, Tangherlini coordinates provide an explicit mathematical gauge to verify that hypothesis \textbf{(H)} reflects an objective kinematic scaling relative to the immobile center $K$, rather than a symmetric illusion of perspective.

\section{Derivation of the Scale Factor $s(v)$ via the Absolute Reference Frame $K$}
\noindent\textit{\textbf{Model Note:} This section transitions fully into \textbf{Paradigm (a)}, utilizing the privileged cosmological rest frame $K$ as the absolute geometric anchor to derive the physical scaling functions.}

\vspace{1em}
Before establishing the explicit coordinate transformations, we introduce the foundational physical hypothesis regarding clock behavior in moving frames as formulated in \cite{Mateljevic2024}.

\subsection*{The Scale Factor Hypothesis and Ontological Grounding}

Roughly speaking, we postulate the following:
\begin{description}
    \item[\textit{(H)}] A clock which is at rest in its own co-moving frame measures small increments of time scaled by a velocity-dependent factor $s = s(v)$.
\end{description}

\noindent To fully understand the physics of hypothesis \textit{(H)}, it is necessary to contrast its meaning within the two core paradigms:
\begin{itemize}
    \item \textbf{Under Paradigm (b) (Relational/Einsteinian):} The scale factor $s(v)$ would represent a reciprocal, structural effect of perspective between two equivalent observers. There is no physical mechanism altering the clock; the "slowing down" is purely a consequence of coordinate synchronization and relative kinematic observation.
    \item \textbf{Under Paradigm (a) (Absolute Background):} The scale factor $s(v)$ represents an objective, ontological modification of the clock's internal periodic cycles. This alteration is driven directly by the clock's real, absolute velocity $v$ relative to the immobile cosmological framework $K$ (such as the CMB rest frame). Time dilation is no longer a symmetric illusion of perspective, but a asymmetric physical reality.
\end{itemize}

\vspace{1em}

Before establishing the explicit coordinate transformations under Paradigm (a), we introduce the foundational physical hypothesis regarding clock behavior in moving frames, as originally formulated in \cite{Mateljevic2024}:

\begin{description}
    \item[\textit{(H)}] A clock which is at rest in its own co-moving frame measures small increments of time scaled by a velocity-dependent factor $s = s(v)$.
\end{description}
\subsection*{The Scale Factor Hypothesis}
Roughly speaking, we postulate the following:
\begin{description}
    \item[\textit{(H)}] A clock which is at rest in its own co-moving frame measures small increments of time scaled by a velocity-dependent factor $s = s(v)$.
\end{description}
We can use the derivation in Section 4 to show that $s(v) = \frac{1}{\gamma_{\text{cave}}(v)}$.

\subsection*{Operational Proof of Hypothesis (H) via Langevin's Clock}
This geometric derivation provides a direct operational proof for the velocity-dependent scale factor $s(v)$ postulated in hypothesis \textit{(H)}. By setting the absolute frame $K(X, T)$ as our isotropic background, the round-trip interval $\Delta T$ registered by the master clocks in absolute space is related to the proper co-moving clock interval $\Delta t'$ via the geometry of light propagation.

Expressing the local time increment $dt \equiv \Delta t'$ as a function of the absolute background time interval $dT \equiv \Delta T$, the relation collapses into:
\begin{equation}
\label{eq:dt_dT_relation}
dt = \frac{1}{\gamma_{\text{cave}}(v)} dT
\end{equation}

\noindent Comparing this directly with the operational definition of hypothesis \textit{(H)}, the scaling function is algebraically driven to be:
\begin{equation}
\label{eq:s_v_definition}
s(v) = \frac{1}{\gamma_{\text{cave}}(v)} = \sqrt{1 - \frac{v^2}{c_{\text{cave}}^2}}
\end{equation}

\noindent To clarify the epistemological and physical grounding of hypothesis \textit{(H)}, several crucial operational points must be emphasized:

\begin{itemize}
    \item \textbf{Ontological vs. Perspectival Dilation:} Unlike standard Special Relativity, where time dilation is symmetrically interpreted as a kinematic illusion of reciprocal perspective between relative observers, hypothesis \textit{(H)} treats the scaling factor $s(v)$ as an objective physical modification of the clock's internal periodic cycles. This alteration is driven directly by the clock's absolute velocity $v$ relative to the immobile cosmological framework $K$.
    \item \textbf{The Operational Clock Rate:} If $dT$ represents an infinitesimal increment of absolute time recorded by a fundamental master clock anchored at rest in $K$, then a moving clock traveling at velocity $v$ through this spatial background will register a local time increment $dt$ according to the direct relation:
    \begin{equation}
    \label{eq:operational_rate}
    dt = s(v)dT
    \end{equation}
    To maintain causal consistency and prevent the violation of the cosmic speed limit $c$ within the isotropic anchor, this scaling factor must mathematically map to the inverse Lorentz factor, meaning $s(v) = 1/\gamma_v = \sqrt{1 - v^2/c^2}$.
    \item \textbf{Elimination of the Synchronization Term:} Because this scaling acts directly on the fundamental rate of the temporal apparatus regardless of its spatial position, it leads to Tangherlini-type transformations. This effectively eliminates the non-local coordinate-mixing term $(-\frac{vx}{c^2})$ characteristic of Einstein's synchronization, establishing an operational framework governed by absolute simultaneity.
\end{itemize}

As a direct corollary of this framework, for the trivial choice $s = 1$, the model collapses to the classical Galilean transformation of Newtonian physics, which assumes an absolute, unscaled space and time. In this generalized framework, the standard Lorentz transformation and the invariance of the two-way speed of light are shown to be not independent postulates, but rather direct mathematical consequences of choosing this localized time dilation scale factor under space-time homogeneity.

\begin{theorem}[Derivation of Lorentz Symmetry without the Second Postulate]
Let $K(X, T)$ be a privileged, immobile cosmological reference frame centered at the cosmic anchor $A$, within which light propagation is truly isotropic and the one-way speed of light strictly equals $c$. Under the conditions of space-time homogeneity, spatial isotropy, and the physical scaling hypothesis \textit{(H)} governing clock rates, the direct coordinate transformation between any two moving frames $S$ and $S'$ reduces uniquely and necessarily to the standard Lorentz Transformation gauge.
\end{theorem}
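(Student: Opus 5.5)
The plan is to build the map $K\to S$ (and likewise $K\to S'$) in stages. First I would fix its form by homogeneity and linearity. Then I would pin down its coefficients using the results already proved: Theorem~\ref{thm:anisotropy_contraction}, Theorem~\ref{thm:inverse_transformation}, the Langevin derivation of Theorem~\ref{ThSym}, and hypothesis \textit{(H)}. Finally I would compose the two maps through $K$ in the symmetric configuration where $S$ and $S'$ move with velocities $-u$ and $u$ and $u\oplus u=v$.

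\textbf{Step 1 (general form).} By homogeneity, linearity and the standard configuration, the most general map from $K(X,T)$ to a frame $S(x,t)$ moving with velocity $w$ along $X$ is
\[
x=A(w)\,(X-wT),\qquad t=s(w)\,T+k(w)\,X,\qquad y=b_y Y,\quad z=b_z Z .
\]
The form of $x$ follows from the fact that the origin $x=0$ moves as $X=wT$.

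\textbf{Step 2 (clock rate).} Along the worldline $X=wT$ we have $dt=\bigl(s+kw\bigr)dT$. Hypothesis \textit{(H)} together with the Langevin round trip in $K$ (relation \eqref{eq:dt_dT_relation}) gives $dt=dT/\gamma_w$. Hence $s+kw=1/\gamma_w$.

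\textbf{Step 3 (spatial factors).} Theorem~\ref{thm:anisotropy_contraction} and Theorem~\ref{thm:inverse_transformation} give $A(w)=\gamma_w$, and the transverse light-clock argument gives $b_y=b_z=1$. At this stage the only unknown left is the single coefficient $k(w)$, and I would observe that it does not affect any round-trip quantity.

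\textbf{Step 4 (fixing the synchronization gauge).} I expect this to be the main obstacle. The natural choice $k=0$ gives the Tangherlini coordinates of the previous section. In that gauge the composition $S\to K\to S'$ is \emph{not} the Lorentz form, because the one-way speeds in $S$ are anisotropic, $c_\pm=c/(1\pm w/c)$. So the statement must be read as uniqueness modulo the synchronization convention, and I would say explicitly which convention is meant.

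I would impose the internal two-way synchronization of the Definition in Section~2 within $S$ itself, i.e.\ the convention used in Theorem~\ref{thm:axisLT}. Concretely, I would require that a signal sent from $x=0$ to $x=d$ and back satisfies $t_{\text{arrival}}=d/c_1$ with $c_1=c$. Here $c_1(S)=c$ by the Claim on $c_1(S)=c_1(S')$, applied to the pair $(K,S)$.

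To solve for $k$, I would write the forward leg in $K$ with speed $c$, transform it with the ansatz of Step~1, and solve for $k$. I expect to obtain $k=-\gamma_w w/c^2$ and hence $s=\gamma_w$. This is consistent with Step~2, since $\gamma_w-\gamma_w w^2/c^2=1/\gamma_w$. The map $K\to S$ is then exactly the Lorentz transformation $\Lambda(w)$.

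\textbf{Step 5 (composition).} With $S=\Lambda(-u)K$ and $S'=\Lambda(u)K$, the map $S\to S'$ is $\Lambda(u)\Lambda(-u)^{-1}=\Lambda(u)\Lambda(u)$. I would multiply the two $2\times 2$ blocks, or use rapidities, to show that this equals $\Lambda(u\oplus u)=\Lambda(v)$.

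\textbf{Step 6 (uniqueness).} Any other admissible choice of $k$ differs from the Lorentz form only by an $\varepsilon$-type resynchronization $t\mapsto t+\text{const}\cdot x$. Such a change leaves all round-trip observables invariant. Therefore the Lorentz form is the unique representative that satisfies the isotropic two-way synchronization in every frame, which is what the theorem calls the ``standard Lorentz Transformation gauge.''
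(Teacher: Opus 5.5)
Your proposal is correct, and its outer skeleton is the paper's. In Section~\ref{sec:coord_trans} the paper places $S$ and $S'$ at velocities $-u$ and $u$ relative to $K$, writes the maps $K\to S$ and $K\to S'$ directly in Lorentz form with $\gamma_u=1/s(u)$, and eliminates $(X,T)$. It then checks, via $v=2u/(1+u^2/c^2)$, that $1/\gamma_v=(1-u^2/c^2)/(1+u^2/c^2)$. This is your Step~5.

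You differ at the key step. The paper simply asserts that homogeneity and time isotropy give the Lorentz form of $K\to S$, including the $uX/c^2$ term. You instead start from a general linear ansatz. You show that \textit{(H)}, the contraction theorem and the transverse light clock fix everything except one coefficient, through $s+kw=1/\gamma_w$. You then get $k=-\gamma_w w/c^2$ and $s=\gamma_w$ only after imposing the Section~2 synchronization inside $S$. Your computation checks out: the forward leg gives $s+kc=\gamma_w(1-w/c)$.

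Your route makes clear what ``uniquely and necessarily'' can honestly mean. The result holds only modulo $t\mapsto t+\mathrm{const}\cdot x$. The Tangherlini choice $k=0$ gives, in the symmetric configuration, $x'=x-\frac{2u}{1-u^2/c^2}\,t$ and $t'=t$ (compare Section~\ref{sec:coord_map}), which is not Lorentz. So the paper's asserted step is really the gauge choice you make explicit.

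A further gain is that $v=u\oplus u$ becomes an output of composing the Lorentz maps. You no longer need the paper's separate argument for Einstein addition ``inside $K$''; in the Tangherlini gauge the origin of $S'$ would move at $2u/(1-u^2/c^2)$ in $S$.

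One small repair: in Step~4, do not cite the Claim $c_1(S)=c_1(S')$ for the pair $(K,S)$. Its proof rests on the principle of relativity, which Paradigm~\textit{(a)} abandons. The identity $c_1(S)=c$ already follows from your Steps~2--3 by the Langevin round trip computed in $K$, as in the paper's metric section.
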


\noindent Crucially, this structural symmetry emerges as a pure algebraic consequence of compounding localized absolute dilations $s(u)$ relative to the cosmic anchor $K$, proving that Einstein's second postulate regarding the universal invariance of the one-way speed of light is logically redundant for establishing the Lorentz transformation group.





\section{The Cosmological Framework and Point $A$}
Modern cosmology operationalizes the concept of an objective, stationary cosmic background through the comoving coordinate system of the Friedmann-Lema\^{i}tre-Robertson-Walker (FLRW) metric. Consider a localized, isotropic light source $A$ embedded directly within the cosmic substrate. By cosmological definition, a source at rest relative to this expanding cosmic fluid exhibits zero peculiar velocity ($v_{\text{pec}} = 0$).

For a spatially flat universe ($k = 0$), the invariant spacetime interval $ds^2$ is expressed as:
\begin{equation}
    ds^2 = -c^2 dt^2 + a^2(t) \left[ dr^2 + r^2(d\theta^2 + \sin^2\theta d\phi^2) \right]
\end{equation}
where $a(t)$ is the dimensionless cosmological scale factor and $t$ represents the Cosmic Time ticked by a comoving observer at rest relative to the Cosmic Microwave Background (CMB).

When a light signal is emitted isotropically from the source $A$ at an initial cosmic time $t_0$, the propagation of the wavefront follows a null geodesic ($ds^2 = 0$). For radial propagation, this simplifies to $c\,dt = a(t)dr$. Separating variables and integrating yields the comoving coordinate distance $r$ traversed by the photon field up to a subsequent cosmic time $t$:
\begin{equation}
    r = \int_{t_0}^{t} \frac{c\,dt'}{a(t')}
\end{equation}
Because the comoving coordinates of point $A$ are time-invariant ($r_A = 0$), the geometric center of the expanding light sphere remains permanently anchored at the physical source within the cosmic fabric.

\section{Coordinate Transformations via Frame $K$}\label{sec:coord_trans}

\noindent\textit{\textbf{Model Note:} This section operates strictly within \textbf{Paradigm (a)} (Absolute Background), embedding the moving systems $S$ and $S'$ symmetrically inside the privileged cosmological rest frame $K$. The velocity-addition law and causality inside $K$ are governed by standard relativistic constraints to preserve the empirical invariance of $c_{\text{cave}}$.}

\vspace{1em}
To establish the explicit dependence of the scaling function $s(v)$—and consequently the Lorentz factor $\gamma$—we map the kinematics of the moving frames...



To establish the explicit dependence of the scaling function $s(v)$—and consequently the Lorentz factor $\gamma$—we map the kinematics of the moving frames $S$ and $S'$ directly onto the spacetime coordinates of the absolute reference frame $K$. Let $(X, T)$ denote the spacetime coordinates in $K$, which serves as the immobile center of the expanding spherical light wavefront.

We consider the foundational instant when the spatial origins $O$ of frame $S$ and $O'$ of frame $S'$ coincide exactly with the cosmic emitter $A$. We set this event as the spacetime origin:
\begin{equation}
    O = O' = A \quad \text{at} \quad t = t' = T = 0
\end{equation}
Recall that frames $S$ and $S'$ move symmetrically relative to $K$ with velocities $-u$ and $u$, respectively, along the $X$-axis, where the relativistic velocity addition yields $u \oplus u = v$.

\subsection*{Remark on Velocity Composition inside $K$}
It is crucial to clarify why velocities within the absolute frame $K(X,T)$ accumulate via Einstein's relativistic velocity-addition law rather than the classical Galilean rule. By definition, $K$ represents the unique cosmological anchor where light propagation is truly isotropic and the one-way speed of light strictly equals $c$. Because $c$ functions as an absolute, unsurpassable physical upper bound for causal propagation within this flat manifold, any compounding of subluminal velocities evaluated relative to the absolute space background must respect this asymptotic limit. Standard Einsteinian addition inside $K$ ensures that no material trajectory can exceed $c$ relative to the cosmic anchor. Consequently, this maintains strict local causality and guarantees that the operational round-trip speed of light, $c_{\mathrm{ave}}$, remains invariantly preserved as $c$ across all derived moving systems.

\subsection*{Derivation of the Scale Factor $s(v)$}
Let $(x, t)$ be the coordinates in $S$ and $(x', t')$ be the coordinates in $S'$. Under our foundational hypothesis \textbf{(H)}, a clock at rest in a moving frame ticks slower relative to the absolute time $T$ of the frame $K$ by the scale factor $s(u)$.

By applying the principles of spatial homogeneity and time isotropy, the transformation from the absolute frame $K$ to the symmetric frames can be expressed as:
\begin{equation}
    x = \gamma_u (X + uT), \quad t = \gamma_u \left( T + \frac{uX}{c^2} \right)
\end{equation}
where $\gamma_u = 1/s(u)$. Due to the structural symmetry embedded in $K$, the transformation from $K$ to $S'$ possesses an identical form with a reversal of the velocity sign:
\begin{equation}
    x' = \gamma_u (X - uT), \quad t' = \gamma_u \left( T - \frac{uX}{c^2} \right)
\end{equation}

By eliminating the absolute coordinates $(X, T)$ from these equations, we can directly deduce the direct transformation between $S$ and $S'$. This algebraic elimination yields the standard Lorentz Transformation, where the compound scaling factor $s(v)$ emerges naturally as a function of the relative velocity $v$. Specifically, the relation between the intermediate velocity $u$ and the relative velocity $v$ dictates that:
\begin{equation}\label{eq:s_v_definition}
    s(v) = \frac{1}{\gamma_v} = \frac{1 - \frac{u^2}{c^2}}{1 + \frac{u^2}{c^2}}
\end{equation}

To understand why the scaling factor $s(v)$ is explicitly expressed in terms of the intermediate velocity $u$, we recall the relativistic velocity addition law. From the perspective of the absolute frame $K$, the frames $S$ and $S'$ move symmetrically in opposite directions with velocities $-u$ and $u$. Consequently, the relative velocity $v$ of $S'$ as observed from $S$ is given by the Einstein velocity addition:
\begin{equation}\label{eq:addition}
    v = u \oplus u = \frac{2u}{1 + \frac{u^2}{c^2}}
\end{equation}

By algebraic manipulation of Eq.~\eqref{eq:addition}, one can show that the standard Lorentz factor $\gamma_v$ satisfies:
\begin{equation}
    1 - \frac{v^2}{c^2} = 1 - \frac{4u^2/c^2}{\left(1 + \frac{u^2}{c^2}\right)^2} = \frac{\left(1 - \frac{u^2}{c^2}\right)^2}{\left(1 + \frac{u^2}{c^2}\right)^2}
\end{equation}

Taking the square root yields $\frac{1}{\gamma_v} = \frac{1 - u^2/c^2}{1 + u^2/c^2}$, which precisely matches the definition of $s(v)$ in Eq.~\eqref{eq:s_v_definition}. This demonstrates that the operational time dilation factor $s(v)$ for a relative velocity $v$ is a direct mathematical consequence of compounding two symmetric localized dilations $s(u)$ relative to the immobile center $K$.

This derivation reveals that the operational symmetry observed between $S$ and $S'$ is not a primary ontological property of spacetime, but rather a mathematical consequence of projecting asymmetric localized measurements back onto the single, immobile geometric center $K$.

\section{Anisotropic One-Way Speeds and Generalized Transformations}\label{sec:anisotropic}

\noindent\textit{\textbf{Model Note:} This section operates strictly within the physical reality of \textbf{Paradigm (a)} (Absolute Background), where the privileged frame $K$ acts as the fundamental anchor. However, we generalize the mathematical description by relaxing the convention of isotropic one-way synchronization in the moving frames. By introducing Reichenbach-type anisotropy parameters ($\varepsilon$ or $\kappa$), we demonstrate that while one-way speeds become gauge-dependent, the physical round-trip average speed $c$ derived from the absolute frame remains strictly invariant.}

\vspace{1em}
Before developing the generalized transformation equations, an important epistemological distinction regarding the operational definition of the speed of light must be emphasized. In the preceding sections and specifically within Theorem \ref{thm:axisLT}, our derivations relied strictly on the invariant two-way speed of light, restricted exclusively to the $x$ and $x'$ directional axes. This approach kept the formulation grounded in directly verifiable radar distances along the line of motion, assuming one-way light propagation to be perfectly isotropic.

Here, while fully maintaining the physical backdrop of the absolute cosmological frame $K(X,T)$, we allow moving observers to adopt a non-standard synchronization gauge. In contrast to the earlier symmetric models, the formulation presented in this section introduces Reichenbach's anisotropy parameter, which explicitly modifies the one-way speeds of light in opposite directions while preserving the overall round-trip average dictated by the absolute cosmic anchor.

To model this directional anisotropy while maintaining a verified invariant two-way speed $c$ along the coordinate axes—which represents a more general, operationally fundamental concept—we introduce Reichenbach's anisotropy parameter $\varepsilon$, where physically $\varepsilon \in (0, 1)$\footnote{The non-inclusive boundaries $\varepsilon \to 0$ and $\varepsilon \to 1$ represent singular physical limits where the one-way speed of light becomes instantaneous ($c_{\rightarrow} \to \infty$ or $c_{\leftarrow} \to \infty$) in one direction, while dropping to $c/2$ in the opposite direction, thus modeling non-local instantaneous signal propagation while strictly maintaining the empirical round-trip average $c$.}. The isotropic limit corresponds to $\varepsilon = 1/2$. This formulation can be equivalently parameterized by the dimensionless index $\kappa = 2(2\varepsilon-1)$, where $\kappa \in (-2, 2)$.

Under this convention, the directional one-way speeds of light along the positive and negative $x$-axis are formally represented as:
\begin{equation}
c_{\rightarrow} = \frac{c}{2\varepsilon}, \quad c_{\leftarrow} = \frac{c}{2(1-\varepsilon)}
\end{equation}

\noindent To ensure that the generalized anisotropic Lorentz factor $\gamma_{\mathrm{an}}$ remains real for a given relative velocity $v$, the parameter $\varepsilon$ must satisfy the tighter kinematic constraint:
\begin{equation}
1 - \frac{v^2}{c^2} + \kappa \frac{v}{c} > 0
\end{equation}

A consistent, axis-aligned generalized linear coordinate transformation that enforces linearity and reciprocity can be formulated as:
\begin{align}
x' \,&=\, \gamma_{\mathrm{an}}\, (x - v t), \label{eq:anx}\\
t' \,&=\, \gamma_{\mathrm{an}} \Bigl[\, t \,-\, \frac{v}{c^2}\,x \,+\, \frac{\kappa}{c}\,x \Bigr], \label{eq:ant}
\end{align}

\noindent where the generalized anisotropic Lorentz factor is defined as:
\begin{equation}
\label{eq:angan}
\gamma_{\mathrm{an}} \,=\, \frac{1}{\sqrt{\,1 - \frac{v^2}{c^2} + \kappa \frac{v}{c}\,}}.
\end{equation}


\subsection{Generalized Velocity Addition Law}

Let $u_M$ denote the velocity of a material object $M$ relative to the moving frame $S'$ ($u_M = dx'/dt'$), $v$ denote the velocity of frame $S'$ relative to the rest frame $S$, and $w$ denote the combined velocity of $M$ observed from frame $S$ ($w = dx/dt$). Taking the differentials of equations \eqref{eq:anx} and \eqref{eq:ant} and dividing them yields the direct transformation for the velocity measured in the moving frame $S'$:
\begin{equation}\label{eq:velAddDirect}
u_M \,=\, \frac{w - v}{1 \,-\, \frac{wv}{c^2} \,+\, \kappa \frac{w}{c}}.
\end{equation}

Conversely, solving for $w = dx/dt$ yields the generalized velocity addition formula from the perspective of frame $S$:
\begin{equation}\label{eq:velAddCombined}
w \,=\, \frac{u_M + v}{1 \,+\, \frac{u_M v}{c^2} \,-\, \kappa \frac{u_M}{c}}.
\end{equation}

\subsection{Anisotropic Longitudinal Doppler Effect}

By evaluating the phase invariance of a wave under the generalized transformations \eqref{eq:anx} and \eqref{eq:ant}, where a light source moves with velocity $v$ along $+x$ emitting a rest frequency $\nu_0$, the observed longitudinal Doppler frequency $\nu$ is formulated as:
\begin{equation}\label{eq:dopplerAnis}
\nu \,=\, \frac{\nu_0 \, \gamma_{\mathrm{an}}}{1 \,-\, 2\varepsilon \frac{v}{c}}.
\end{equation}
In the isotropic limit ($\varepsilon = 1/2 \implies \kappa = 0$), equation \eqref{eq:dopplerAnis} smoothly simplifies back to the standard relativistic formula $\nu = \nu_0 \sqrt{(1 + v/c)/(1 - v/c)}$.
\section{Coordinate Map Between Multiple Moving Frames via the Absolute Anchor}\label{sec:coord_map}

When the absolute frame $K(X, T)$ is chosen as the unique rest frame where space is truly isotropic, the synchronization parameter $\kappa$ vanishes locally within $K$. Let us consider two moving systems, $S(x, t)$ and $S'(x', t')$, traveling at absolute velocities $v$ and $u$, respectively, relative to $K$. The forward transformations from the absolute anchor $K$ to any moving frame are governed by $x = \gamma_{\mathrm{an}}(v)(X - vT)$ and $t = \frac{T}{\gamma_{\mathrm{an}}(v)}$. Inverting these relations yields the formulas for the absolute coordinates:
\begin{align}
T \,&=\, \gamma_{\mathrm{an}}(v)\,t, \label{eq:invTangT}\\
X \,&=\, \frac{x}{\gamma_{\mathrm{an}}(v)} \,+\, v\,\gamma_{\mathrm{an}}(v)\,t. \label{eq:invTangX}
\end{align}

Substituting these expressions into the corresponding forward transformations for frame $S'$ (moving at absolute velocity $u$, so that $x' = \gamma_{\mathrm{an}}(u)(X - uT)$ and $t' = \frac{T}{\gamma_{\mathrm{an}}(u)}$), the direct coordinate mapping from $S(x, t)$ to $S'(x', t')$ reduces elegantly to the following form:
\begin{align}
x' \,&=\, \frac{\gamma_{\mathrm{an}}(u)}{\gamma_{\mathrm{an}}(v)} \, x \,-\, \gamma_{\mathrm{an}}(u)\,\gamma_{\mathrm{an}}(v)\,(u - v)\,t, \label{eq:directX}\\
t' \,&=\, \frac{\gamma_{\mathrm{an}}(v)}{\gamma_{\mathrm{an}}(u)} \, t. \label{eq:directT}
\end{align}

Differentiating equations \eqref{eq:directX} and \eqref{eq:directT} yields the coordinate velocity $w' = \frac{dx'}{dt'}$. Upon differentiation, the operational time scaling factors combine to modify the transformed velocity, showing how the Galilean addition law is preserved for relative velocities within the Tangherlini framework.

\subsection{Verification of Time Dilation and Galilean Velocity Transformation}
To rigorously verify the physical behavior of time dilation and velocity addition within this setting, we analyze the differentials of the direct coordinate mapping between $S(x, t)$ and $S^{\prime}(x^{\prime}, t^{\prime})$ given in Eqs. \eqref{eq:directX} and \eqref{eq:directT}.

First, let us examine a clock that is at rest at the spatial origin of the frame $S$, meaning $dx = 0$. Differentiating Eq. \eqref{eq:directT} yields the temporal relationship:
\begin{equation}
dt^{\prime} = \frac{\gamma_{\mathrm{an}}(v)}{\gamma_{\mathrm{an}}(u)}dt
\end{equation}

Recalling that in the isotropic limit ($\kappa = 0$), the generalized factors reduce to the standard Lorentz factors, $\gamma_{\mathrm{an}}(v) = \gamma_v$ and $\gamma_{\mathrm{an}}(u) = \gamma_u$, we obtain:
\begin{equation}
dt^{\prime} = \frac{\gamma_v}{\gamma_u}dt
\end{equation}

This confirms that the relative time dilation between two moving systems is non-reciprocal and governed strictly by the ratio of their absolute velocities relative to the anchor $K$.

Next, we evaluate how coordinate velocities transform through these equations. Let $w = dx/dt$ denote the velocity of a material object measured in frame $S$, and $w^{\prime} = dx^{\prime}/dt^{\prime}$ denote its velocity measured in frame $S^{\prime}$. Taking the total differentials of Eqs. \eqref{eq:directX} and \eqref{eq:directT} gives:
\begin{align}
dx^{\prime} &= \frac{\gamma_{\mathrm{an}}(u)}{\gamma_{\mathrm{an}}(v)}dx - \gamma_{\mathrm{an}}(u)\gamma_{\mathrm{an}}(v)(u - v)dt, \\
dt^{\prime} &= \frac{\gamma_{\mathrm{an}}(v)}{\gamma_{\mathrm{an}}(u)}dt. \label{eq:dtdiff}
\end{align}

Dividing $dx^{\prime}$ by $dt^{\prime}$ yields the transformed velocity $w^{\prime}$ in the moving frame $S^{\prime}$:
\begin{equation}
w^{\prime} = \frac{dx^{\prime}}{dt^{\prime}} = \frac{\frac{\gamma_{\mathrm{an}}(u)}{\gamma_{\mathrm{an}}(v)}dx - \gamma_{\mathrm{an}}(u)\gamma_{\mathrm{an}}(v)(u - v)dt}{\frac{\gamma_{\mathrm{an}}(v)}{\gamma_{\mathrm{an}}(u)}dt}
\end{equation}

Factoring out the differentials, we arrive at the generalized velocity transformation law:
\begin{equation}\label{eq:veltrans}
w^{\prime} = \left(\frac{\gamma_{\mathrm{an}}(u)}{\gamma_{\mathrm{an}}(v)}\right)^2 \frac{dx}{dt} - \gamma_{\mathrm{an}}^2(u)(u - v) = \left(\frac{\gamma_{\mathrm{an}}(u)}{\gamma_{\mathrm{an}}(v)}\right)^2 w - \gamma_{\mathrm{an}}^2(u)(u - v).
\end{equation}

\section{Non-Reciprocal Length Contraction}\label{sec:non_rec_length}

The preservation of absolute time in the Tangherlini framework completely alters the physical nature of spatial contraction, rendering it fundamentally non-reciprocal. To maintain consistency with our established absolute frame $K(X, T)$ and a general moving frame $S'(x', t')$ traveling at absolute velocity $u$, we analyze the spatial transformations directly derived from the core anchor. By utilizing the absolute time relation $T = \gamma_{\mathrm{an}}(u)t'$, we can express the inverse spatial coordinate mapping such that the right-hand side depends strictly on the local coordinates of $S'$:
\begin{equation}\label{eq:invXiPure}
X \,=\, \frac{x'}{\gamma_{\mathrm{an}}(u)} \,+\, u\,\gamma_{\mathrm{an}}(u)\,t'.
\end{equation}
This structural distinction creates two asymmetrical and non-reciprocal measurement perspectives:

\begin{enumerate}
\item \textbf{Measuring a Moving Rod from Absolute Space ($K \rightarrow S'$):} If a rod is at rest in the moving frame $S'$ with proper length $\Delta x' = l_0$, an observer in the absolute anchor $K$ must determine its length by measuring both ends simultaneously in absolute time, meaning $\Delta T = 0$. Utilizing the forward transformation $x' = \gamma_{\mathrm{an}}(u)(X - uT)$, this operational condition yields:
\begin{equation}
l_0 = \gamma_{\mathrm{an}}(u) \Delta X \implies \Delta X = \frac{l_0}{\gamma_{\mathrm{an}}(u)}.
\end{equation}

\item \textbf{Measuring Absolute Space from a Moving Frame ($S' \rightarrow K$):} Conversely, let us consider a static distance fixed at rest within the absolute frame $K$, such that $\Delta X = l_0$. Because absolute simultaneity holds globally across the Tangherlini manifold, a simultaneous measurement in $K$ ($\Delta T = 0$) mapping directly to the moving system guarantees that $\Delta t' = 0$. Substituting these invariant intervals into the forward spatial transformation equation results in:
\begin{equation}
\Delta x' = \gamma_{\mathrm{an}}(u) \Delta X \implies \Delta x' = \gamma_{\mathrm{an}}(u) l_0.
\end{equation}
\end{enumerate}

Thus, the geometric reciprocity found in standard Special Relativity is broken: an observer anchored in absolute space perceives moving objects as contracted ($l = l_0 / \gamma_{\mathrm{an}}(u)$), whereas an observer moving through the absolute anchor perceives stationary absolute distances as physically expanded ($l' = \gamma_{\mathrm{an}}(u) l_0$).

\subsection{Transformation of Accelerations}

To determine how accelerations transform between the moving frames under the absolute anchor framework, we define the coordinate acceleration in frame $S$ as $a = \frac{dw}{dt}$ and in frame $S'$ as $a' = \frac{dw'}{dt'}$.

Taking the differential of the velocity transformation equation~\eqref{eq:veltrans} yields:
\begin{equation}\label{eq:dw_prime_acc}
dw' = \left( \frac{\gamma_{\mathrm{an}}(u)}{\gamma_{\mathrm{an}}(v)} \right)^2 dw
\end{equation}

To find the acceleration $a'$, we divide this differential $dw'$ by the temporal differential $dt'$ from Eq.~\eqref{eq:dtdiff}, where $dt' = \frac{\gamma_{\mathrm{an}}(v)}{\gamma_{\mathrm{an}}(u)} dt$. This results in:
\begin{equation}\label{eq:a_prime_fraction}
a' = \frac{dw'}{dt'} = \frac{\left( \frac{\gamma_{\mathrm{an}}(u)}{\gamma_{\mathrm{an}}(v)} \right)^2 dw}{\frac{\gamma_{\mathrm{an}}(v)}{\gamma_{\mathrm{an}}(u)} dt}
\end{equation}

Simplifying the fraction gives the direct transformation law for accelerations:
\begin{equation}\label{eq:a_prime_final_law}
a' = \left( \frac{\gamma_{\mathrm{an}}(u)}{\gamma_{\mathrm{an}}(v)} \right)^3 \frac{dw}{dt} = \left( \frac{\gamma_{\mathrm{an}}(u)}{\gamma_{\mathrm{an}}(v)} \right)^3 a
\end{equation}

\textbf{Discussion on Acceleration Transformation:}
Equation~\eqref{eq:a_prime_final_law} demonstrates a remarkable property of the Tangherlini-type synchronization rooted in the absolute frame $K$:
\begin{itemize}
    \item \textbf{Linearity and Independence from Velocity:} In standard Special Relativity, the acceleration transformation is non-linear and depends heavily on the instantaneous velocity $w$ of the particle itself, due to the relativity of simultaneity. Here, the acceleration $a'$ is strictly proportional to $a$ and depends solely on the frames' absolute velocities via $\left(\frac{\gamma_{\mathrm{an}}(u)}{\gamma_{\mathrm{an}}(v)}\right)^3$.
    \item \textbf{Invariance of Zero Acceleration:} If a particle moves uniformly without acceleration in frame $S$ ($a = 0$), it is guaranteed to have zero acceleration in frame $S'$ ($a' = 0$). This ensures that the definition of an inertial trajectory remains invariant across all frames.
    \item \textbf{Non-Relativistic Limit:} Just as with velocities, when the frames move at speeds much smaller than light ($u, v \ll c$), the scaling factor $\left(\frac{\gamma_{\mathrm{an}}(u)}{\gamma_{\mathrm{an}}(v)}\right)^3 \to 1$. The expression then reduces exactly to the classical Newtonian invariant for acceleration:
    \begin{equation}\label{eq:acceleration_classical}
    a' = a
    \end{equation}
\end{itemize}
This linear, particle-velocity-independent scaling highlights how absolute simultaneity fundamentally preserves the Newtonian structural form of mechanics while still accommodating high-speed time dilation effects.


\section{Metric Structure, Invariance of the Two-Way Light Speed, and Sagnac Consistency}

In the absolute frame $K(X, T)$, the spacetime interval retains its isotropic Minkowski form: $ds^2 = c^2 dT^2 - dX^2$. Substituting the differentials of the inverse transformations yields the non-diagonal metric for the moving frame $S$:
\begin{equation}\label{eq:non_diagonal_metric}
ds^2 = \frac{1}{\gamma_{\text{an}}^2(v)} \left[ (c^2 - v^2)dt^2 - 2v \, dx \, dt - dx^2 \right].
\end{equation}
The off-diagonal term $-2v \, dx \, dt$ breaks Minkowski isometry, explicitly exposing the absolute velocity $v$, though the Riemann curvature tensor remains zero (identifying a flat spacetime manifold). Evaluating null geodesics ($ds^2 = 0$) yields the directional one-way speeds of light: $c_{\rightarrow} = c - v$ and $c_{\leftarrow} = -(c + v)$. Over a closed round-trip path of proper length $l_0$ in frame $S$, the total coordinate time required is:
\begin{equation}\label{eq:t_total_sagnac}
t_{\text{total}} = \frac{l_0}{c - v} + \frac{l_0}{c + v} = \frac{2l_0 c}{c^2 - v^2} = \frac{2l_0}{c}\gamma_{\text{an}}^2(v).
\end{equation}
Accounting for the physical clock dilation ($t_{\text{measured}} = t_{\text{total}}/\gamma_{\text{an}}(v)$) and longitudinal length contraction ($l = l_0/\gamma_{\text{an}}(v)$), the operational two-way speed measured by the moving experimenter becomes:
\begin{equation}
c_{\text{ave}} = \frac{2l}{t_{\text{measured}}} = \frac{2(l_0/\gamma_{\text{an}}(v))}{t_{\text{total}}/\gamma_{\text{an}}(v)} = \frac{2l_0}{t_{\text{total}}} = c.
\end{equation}
This algebraically proves that Galilean velocity subtraction for one-way propagation is perfectly compatible with the null results of Michelson--Morley-type experiments.

\subsection{Sagnac Consistency and the Integration of Cross-Terms over Closed Loops}

To demonstrate that the non-diagonal metric framework is physically viable, we verify its consistency with the Sagnac effect, which represents a crucial non-local benchmark for alternative spacetime synchronization schemes.

Consider a circular loop of radius $R$ (with enclosed area $A = \pi R^2$) rotating with a uniform angular velocity $\Omega$ relative to the absolute anchor $K$. We evaluate the propagation of two counter-propagating light beams along this closed spatial path. According to the non-diagonal metric \eqref{eq:non_diagonal_metric}, the line element for null geodesics ($ds^2 = 0$) along the boundary satisfies:
\begin{equation}
\frac{1}{\gamma_{\text{an}}^2(v)} \left[ (c^2 - v^2)dt^2 - 2v \, dx \, dt - dx^2 \right] = 0.
\end{equation}
Solving this quadratic equation for the coordinate time differential $dt$ yields:
\begin{equation}
dt = \frac{v \, dx \pm \sqrt{v^2 dx^2 + (c^2 - v^2)dx^2}}{c^2 - v^2} = \frac{v \, dx \pm c |dx|}{c^2 - v^2}.
\end{equation}
Consequently, the directional times for light traveling in the forward ($+$) and backward ($-$) directions along a path segment $dx$ are given by:
\begin{equation}
dt_{\rightarrow} = \frac{dx}{c - v}, \quad \text{and} \quad dt_{\leftarrow} = -\frac{dx}{c + v}.
\end{equation}
To find the total accumulated round-trip time difference $\Delta t_{\text{Sagnac}}$ between the co-propagating and counter-propagating beams, we perform a line integration over the entire closed spatial loop $\mathcal{C}$:
\begin{equation}\label{eq:sagnac_integral_split}
\Delta t_{\text{Sagnac}} = \oint_{\mathcal{C}} dt_{\rightarrow} - \oint_{\mathcal{C}} dt_{\leftarrow} = \oint_{\mathcal{C}} \frac{v \, dx + c \, dx}{c^2 - v^2} - \oint_{\mathcal{C}} \frac{v \, dx - c \, dx}{c^2 - v^2}.
\end{equation}
Expanding the integrals allows us to isolate the contributions of the symmetric and asymmetric terms:
\begin{equation}\label{eq:sagnac_expanded}
\Delta t_{\text{Sagnac}} = \oint_{\mathcal{C}} \frac{2v}{c^2 - v^2} dx + \oint_{\mathcal{C}} \frac{2c}{c^2 - v^2} dx.
\end{equation}

\begin{remark}[Vanishing of Spatial Cross-Terms]
A profound topological property of this integration arises from the first integral in \eqref{eq:sagnac_expanded}. Because the coordinates represent a closed conservative manifold and the absolute velocity field $v = \Omega R$ is uniform along the circular trajectory, the spatial displacement vector forms an exact differential. Therefore, the integration of the cross-term over the closed loop vanishes identically:
\begin{equation}\label{eq:topological_vanishing}
\oint_{\mathcal{C}} v \, dx = v \oint_{\mathcal{C}} dx = 0.
\end{equation}
As a direct mathematical consequence, the off-diagonal terms that break Minkowski isometry do not contribute to the net non-local phase shift.
\end{remark}

The remaining integral evaluates strictly over the proper perimeter $l_0 = 2\pi R$. In the non-relativistic regime ($\Omega R \ll c$, meaning $v^2/c^2 \rightarrow 0$), the expression reduces to:
\begin{equation}
\Delta t_{\text{Sagnac}} \approx \frac{2}{c} \oint_{\mathcal{C}} dx = \frac{2}{c} (2\pi R) \cdot \frac{\Omega R}{c} = \frac{4A\Omega}{c^2},
\end{equation}
where $A = \pi R^2$ is the area enclosed by the loop. This rigorous integration proves that while the off-diagonal term $-2v \, dx \, dt$ explicitly registers the absolute velocity $v$ for local one-way path segments, its global path integral over any closed circuit is topologically invariant ($\oint_{\mathcal{C}} v \, dx = 0$). Hence, the accumulated proper times are invariant, making the Tangherlini--Edwards framework fully consistent with the Sagnac effect while resolving the moving-frame paradoxes of Einstein's convention.

\subsection{Geometrical and Topological Analysis of the Non-Diagonal Metric}

To deeply understand the ontological transition from Minkowski spacetime to the absolute anchor framework, we analyze the structural properties of the metric $ds^2$ given in \eqref{eq:non_diagonal_metric}:
\begin{enumerate}
    \item \textbf{The Role of the Off-Diagonal Cross-Term:} The presence of the non-diagonal term $-2v \, dx \, dt$ is a direct mathematical manifestation of rejecting Einstein's synchronization convention in favor of absolute simultaneity. In standard Minkowski coordinates, the cross-term vanishes because the synchronization is chosen precisely to make the coordinate speed of light isotropic ($c_{\rightarrow} = c_{\leftarrow} = c$). By maintaining a global clock synchronization aligned with the absolute anchor $K$, the physical anisotropy of the one-way speed of light in the moving frame $S$ is explicitly mapped onto the geometry via this $dx \, dt$ coupling.
    \item \textbf{Invariance of the Spacetime Flatness:} Although the metric tensor $g_{\mu\nu}$ contains off-diagonal elements and explicitly depends on the absolute velocity $v$ of the frame, the underlying manifold remains strictly flat. An explicit computation of the Riemann curvature tensor yields:
    \begin{equation}
    R^{\mu}_{\;\nu\rho\sigma} = 0.
    \end{equation}
    Consequently, the absolute velocity $v$ acts as a kinematic gauge parameter rather than a source of gravitational curvature, confirming that the transformation modifies the descriptive coordinate layer without generating physical forces.
\end{enumerate}


\section{The Invalidation of Einstein’s Principle of Relativity and Conclusion}

The implementation of the symmetric model within the Tangherlini--Edwards framework leads to a profound epistemological re-evaluation of spacetime kinematics, resulting in the explicit breakdown of Einstein’s Principle of Relativity within moving frames. Throughout this work, the foundational reference frame $K$ stands out as the unique, privileged system where space is strictly isotropic and the one-way speed of light is a universal constant $c$ in all directions.

In any frame moving relative to $K$, such as $S$ and $S'$, the enforcement of global absolute simultaneity inevitably causes the local one-way speed of light to become direction-dependent ($c \mp v$). Because the absolute motion of these frames objectively alters the one-way propagation of light within their localized space, Einstein’s Principle of Relativity---which demands the identical formulation of all physical laws and constants across all inertial systems---does not hold in the moving systems $S$ and $S'$.

\subsection{Resolution of the Wavefront Center Paradox}

The primary conceptual triumph of this generalized framework is the resolution of the paradox of multiple co-existing wavefront centers. In standard Special Relativity, Einstein’s operational approach forces every inertial observer to conclude that they occupy the immobile center of the expanding spherical light wave. This leads to an ontological contradiction where a single physical event (the emission of a wavefront) results in multiple, frame-dependent geometric centers.

By rejecting Einstein's localized synchronization convention in favor of absolute simultaneity anchored to $K$, this paper demonstrates that:
\begin{itemize}
    \item The apparent kinematic symmetry of standard Special Relativity is a mathematical artifact of the Einstein--Poincar\'{e} synchronization convention rather than an intrinsic property of spacetime.
    \item The operational symmetry observed between moving systems is broken, thereby restoring a unique, objective, frame-independent center for the propagating light wavefront.
    \item The physical scaling properties---such as the non-reciprocal time dilation governed by the scale factor $s(v)$ and the non-reciprocal length contraction---reflect an objective kinematic scaling relative to the immobile center $K$ rather than a symmetric illusion of perspective.
\end{itemize}
\section{Concluding Remarks}
\textit{Model Note: This concluding synthesis provides an overview of how the mathematical convergence of Paradigm~\textit{(a)} and hypothesis~\textit{(H1)} offers a logically minimal, ontologically unambiguous foundation for relativistic kinematics.}

Throughout this work, we have analyzed two distinct physical paradigms: Einstein’s kinematically symmetric framework of ``stationary'' observers, and the preferred cosmological rest frame, herein operationalized via an absolute geometric anchor. This comparative analysis addresses two fundamental ontological possibilities:
\begin{enumerate}
    \item[\textit{(a)}] The existence of a privileged spatial background---such as the Cosmic Microwave Background (CMB) acting as a physical anchor.
    \item[\textit{(b)}] The complete absence of an absolute space, where physical laws possess no inherent reference point.
\end{enumerate}

By relaxing Einstein’s second postulate regarding the universal invariance of the one-way speed of light, we adopted an operational framework grounded strictly on the two-way (round-trip) speed of light along the coordinate axes. When evaluated alongside the principles of space-time homogeneity, linearity, and reciprocity under hypothesis~\textit{(H1)}, we demonstrated that the classical Lorentz transformation is recovered exactly along the $x$ and $x'$ axes. Crucially, this structural symmetry emerges purely under a two-way synchronization scheme, proving that isotropic one-way conventions are logically redundant.

To clarify the physical implications and the foundational structure of the Special Theory of Relativity (STR), this approach invites us to re-examine the standard thought experiment at the critical instant when the origins coincide ($O' = O$ at $t' = t = T = 0$) and a light signal is emitted at their common spacetime origin. This scenario immediately raises a fundamental question: is the resulting wavefront a sphere, and can the center of this sphere, $A$, be considered stationary in an objective, frame-independent sense? Within Einstein’s framework, each inertial observer claims to occupy the stationary center of their own expanding sphere, a perspective that introduces an ontological ambiguity regarding the localization of the wavefront center.

Ultimately, the alternative framework defined by the absolute center $A$ within Paradigm~\textit{(a)} provides a mathematically rigorous, ontic interpretation of relativistic phenomena. As demonstrated through the rigorous integration of the non-diagonal metric $ds^2$ in Section~11, the global path integral of the spatial cross-terms over any closed circuit vanishes identically:
\begin{equation}
\oint_{\mathcal{C}} v \, dx = 0.
\end{equation}
This topological property ensures that the predictive and verifiable success of modern physics is fully preserved within a flat, classical manifold characterized by absolute simultaneity and a privileged geometric anchor.

In summary, by relying on the directly observable properties of the two-way speed of light, our approach enhances the mathematical and epistemological rigor of relativity theory. It establishes a consistent, generalized velocity-addition law, satisfies Sagnac clock invariance, and remains perfectly compatible with the null results of Michelson--Morley-type experiments.
\appendix
\section{Derivation of the Tangherlini Transformation Matrix}\label{app:tangherlini}

To explicitly formalize the kinematical connection between the absolute frame $K(X, T)$ and the moving frame $S(x, t)$ under the constraints of absolute simultaneity, we derive the structural transformation matrix. As established in Section 2, the inverse coordinate transformations are rigorously governed by:
\begin{equation}\label{eq:app_inv_transf}
x = \gamma(X - vT), \quad \text{and} \quad T = \frac{c_n}{c}\gamma t.
\end{equation}
By defining the directional anisotropy scale factor as $a = \frac{c_n}{c}\gamma$, the temporal relation reads $T = a t$. To find the forward transformations expressing the moving coordinates $(x, t)$ as explicit functions of the absolute spacetime points $(X, T)$, we algebraically invert the system \eqref{eq:app_inv_transf}. Solving for $x$ and $t$ yields:
\begin{equation}
t = \frac{T}{a} = \frac{c}{c_n \gamma} T,
\end{equation}
and substituting this temporal core back into the spatial projection provides:
\begin{equation}
X = \frac{x}{\gamma} + vT \implies x = \gamma X - \gamma v T.
\end{equation}
Expressing this coupled structural alignment in standard linear algebraic matrix notation, the transformation from the absolute background $K$ to the localized frame $S$ is represented by:
\begin{equation}\label{eq:tangherlini_matrix}
\begin{pmatrix}
x \\
t
\end{pmatrix}
=
\begin{pmatrix}
\gamma & -\gamma v \\
0 & \frac{c}{c_n \gamma}
\end{pmatrix}
\begin{pmatrix}
X \\
T
\end{pmatrix}.
\end{equation}
The matrix in \eqref{eq:tangherlini_matrix} defines the exact Tangherlini transformation. The lower-left vanishing component ($M_{21} = 0$) mathematically guarantees that the synchronization is completely independent of the spatial coordinate $X$, which preserves absolute global simultaneity ($dT = 0 \implies dt = 0$).

\section{Equivalence with Nonlinear and Conformal Metric Projections}\label{app:conformal}

To extend the geometric robustness of Paradigm~\textit{(a)}, we analyze how the non-diagonal metric framework behaves under nonlinear coordinate reformulations and conformal mappings. Consider a generalized non-linear coordinate transformation where the flat Minkowski metric is mapped onto an arbitrary differentiable manifold $\mathcal{M}$. Let the metric tensor $g_{\mu\nu}$ in a generalized moving frame be expressed via a conformal factor $\Omega^2(x, t)$ such that:
\begin{equation}\label{eq:conformal_metric}
ds^2 = \Omega^2(x, t) \left[ g_{\mu\nu}^{\text{Tang}} dx^{\mu} dx^{\nu} \right],
\end{equation}
where $g_{\mu\nu}^{\text{Tang}}$ represents the covariant components of the flat non-diagonal metric derived in \eqref{eq:non_diagonal_metric}:
\begin{equation}
g_{\mu\nu}^{\text{Tang}} = \frac{1}{\gamma^2}
\begin{pmatrix}
c^2 - v^2 & -v \\
-v & -1
\end{pmatrix}.
\end{equation}
An explicit calculation of the Christoffel symbols $\Gamma^{\lambda}_{\mu\nu}$ for the generalized conformal metric \eqref{eq:conformal_metric} reveals that the connection fields pick up terms proportional to the directional derivatives of the conformal scale factor:
\begin{equation}
\Gamma^{\lambda}_{\mu\nu} = \frac{1}{\Omega} \left( \delta^{\lambda}_{\mu} \partial_{\nu} \Omega + \delta^{\lambda}_{\nu} \partial_{\mu} \Omega - g_{\mu\nu} g^{\lambda\sigma} \partial_{\sigma} \Omega \right).
\end{equation}
If the conformal factor satisfies the strict spatial homogeneity constraint $\partial_x \Omega = 0$, the spatial cross-terms retain their conservative structure. Consequently, the global topological property established in Section 11 remains invariant:
\begin{equation}
\oint_{\mathcal{C}} \Gamma^{0}_{11} dx = 0.
\end{equation}
This mathematical proof confirms that the operational consistency of the Absolute Background Paradigm is topologically invariant. The framework is immune to localized nonlinear deformations, and the absolute velocity parameter $v$ remains completely decoupled from non-local, closed-loop physical observables. $\blacksquare$

\appendix
\section{Derivation of the Tangherlini Transformation Matrix}\label{app:tangherlini}

To explicitly formalize the kinematical connection between the absolute frame $K(X, T)$ and the moving frame $S(x, t)$ under the constraints of absolute simultaneity, we derive the structural transformation matrix. As established in Section~2, the inverse coordinate transformations are rigorously governed by:
\begin{equation}\label{eq:app_inv_transf}
x = \gamma(X - vT), \quad \text{and} \quad T = \frac{c_n}{c}\gamma t.
\end{equation}
By defining the directional anisotropy scale factor as $a = \frac{c_n}{c}\gamma$, the temporal relation reads $T = a t$. To find the forward transformations expressing the moving coordinates $(x, t)$ as explicit functions of the absolute spacetime points $(X, T)$, we algebraically invert the system \eqref{eq:app_inv_transf}. Solving for $x$ and $t$ yields:
\begin{equation}
t = \frac{T}{a} = \frac{c}{c_n \gamma} T,
\end{equation}
and substituting this temporal core back into the spatial projection provides:
\begin{equation}
X = \frac{x}{\gamma} + vT \implies x = \gamma X - \gamma v T.
\end{equation}
Expressing this coupled structural alignment in standard linear algebraic matrix notation, the transformation from the absolute background $K$ to the localized frame $S$ is represented by:
\begin{equation}\label{eq:tangherlini_matrix}
\begin{pmatrix}
x \\
t
\end{pmatrix}
=
\begin{pmatrix}
\gamma & -\gamma v \\
0 & \frac{c}{c_n \gamma}
\end{pmatrix}
\begin{pmatrix}
X \\
T
\end{pmatrix}.
\end{equation}
The matrix in \eqref{eq:tangherlini_matrix} defines the exact Tangherlini transformation. The lower-left vanishing component ($M_{21} = 0$) mathematically guarantees that the synchronization is completely independent of the spatial coordinate $X$, which preserves absolute global simultaneity ($dT = 0 \implies dt = 0$).

\section{The Tangherlini Transformation Framework and Constraints}\label{app:tangherlini_framework}

To ground our algebraic matrix representation within historical and kinematic constraints, we examine Tangherlini's original 1958 setup for an alternative kinematic framework that maintains absolute simultaneity. Tangherlini assumed a linear set of transformation equations between a stationary frame $S(x, y, z, t)$ and a moving frame $S'(x', y', z', t')$ traveling at a constant velocity $v$ along the $x$-axis:
\begin{equation}\label{eq:tang_linear_setup}
x' = A(x - vt), \quad \text{and} \quad t' = Bt.
\end{equation}
To determine the functional form of the coefficients $A$ and $B$, the system is constrained by two major operational realities:
\begin{enumerate}
    \item \textbf{Conservation of the Two-Way Speed of Light:} The empirical null results of Michelson--Morley-type experiments dictate that the average round-trip speed of light is isotropically preserved as $c$. When light travels from the origin to a mirror at a distance $x'$ and returns, the proper round-trip time in the moving frame must satisfy $\Delta t' = 2x'/c$.
    \item \textbf{Time Dilation:} A moving clock localized at the spatial origin of the moving frame ($x' = 0$, implying $x = vt$) runs slower due to its absolute velocity by the standard factor $\gamma = (1 - v^2/c^2)^{-1/2}$. Therefore, the moving frame interval scales down to $t' = t/\gamma$.
\end{enumerate}
From the time dilation requirement and \eqref{eq:tang_linear_setup}, it directly follows that the temporal coefficient is $B = \gamma^{-1}$. Combining this boundary value with the round-trip light speed constraint, the spatial coefficient must match the standard Lorentz form, $A = \gamma$. Substituting $A$ and $B$ back into the initial equations yields the standard Tangherlini transformations:
\begin{equation}\label{eq:tang_full_set}
x' = \gamma(x - vt), \quad y' = y, \quad z' = z, \quad \text{and} \quad t' = \gamma^{-1}t.
\end{equation}
Because the time relation in \eqref{eq:tang_full_set} completely lacks the non-local spatial mixing term $-vx/c^2$, the one-way speed of light becomes anisotropic ($c_{\pm} = c \mp v$). However, it remains mathematically and experimentally equivalent to standard Special Relativity for all major round-trip physical tests.

\section{Equivalence with Nonlinear and Conformal Metric Projections}\label{app:conformal}

To extend the geometric robustness of Paradigm~\textit{(a)}, we analyze how the non-diagonal metric framework behaves under nonlinear coordinate reformulations and conformal mappings. Consider a generalized non-linear coordinate transformation where the flat Minkowski housing is mapped onto an arbitrary differentiable manifold $\mathcal{M}$. Let the metric tensor $g_{\mu\nu}$ in a generalized moving frame be expressed via a conformal factor $\Omega^2(x, t)$ such that:
\begin{equation}\label{eq:conformal_metric}
ds^2 = \Omega^2(x, t) \left[ g_{\mu\nu}^{\text{Tang}} dx^{\mu} dx^{\nu} \right],
\end{equation}
where $g_{\mu\nu}^{\text{Tang}}$ represents the covariant components of the flat non-diagonal metric derived in Section~11:
\begin{equation}
g_{\mu\nu}^{\text{Tang}} = \frac{1}{\gamma^2}
\begin{pmatrix}
c^2 - v^2 & -v \\
-v & -1
\end{pmatrix}.
\end{equation}
An explicit calculation of the Christoffel symbols $\Gamma^{\lambda}_{\mu\nu}$ for the generalized conformal metric \eqref{eq:conformal_metric} reveals that the connection fields pick up terms proportional to the directional derivatives of the conformal scale factor:
\begin{equation}
\Gamma^{\lambda}_{\mu\nu} = \frac{1}{\Omega} \left( \delta^{\lambda}_{\mu} \partial_{\nu} \Omega + \delta^{\lambda}_{\nu} \partial_{\mu} \Omega - g_{\mu\nu} g^{\lambda\sigma} \partial_{\sigma} \Omega \right).
\end{equation}
If the conformal factor satisfies the strict spatial homogeneity constraint $\partial_x \Omega = 0$, the spatial cross-terms retain their conservative structure. Consequently, the global topological property established in Section~11 remains invariant:
\begin{equation}
\oint_{\mathcal{C}} \Gamma^{0}_{11} dx = 0.
\end{equation}
This mathematical proof confirms that the operational consistency of the Absolute Background Paradigm is topologically invariant. The framework is immune to localized nonlinear deformations, and the absolute velocity parameter $v$ remains completely decoupled from non-local, closed-loop physical observables. $\blacksquare$

\end{document}